\newcommand{\postF}{\Xi}
\newcommand{\condF}{\mathcal{C}}
\newcommand{\parameter}{\mathfrak{P}}
\newcommand{\ldot}{\mathpunct{.}}
\newcommand{\bsem}[1]{\llparenthesis #1 \rrparenthesis}
\newcommand{\sem}[1]{\llbracket #1 \rrbracket}
\newcommand{\prog}{\mathbb{P}}
\newcommand{\progg}{\mathbb{Q}}
\newcommand{\progVars}{\mathcal{V}}
\newcommand{\nat}{\mathbb{N}}
\newcommand{\bool}{\mathbb{B}}
\newcommand{\intSet}{\mathbb{Z}}
\definecolor{programOrange}{RGB}{151, 64,7}
\newcommand{\myif}{\text{\color{programOrange}\texttt{if}}}
\newcommand{\myassume}{\text{\color{programOrange}\texttt{assume}}}
\newcommand{\mywhile}{\text{\color{programOrange}\texttt{while}}}
\newcommand{\myskip}{\text{\color{programOrange}\texttt{skip}}}
\DeclareMathOperator{\myassign}{{\color{programOrange}=}}
\DeclareMathOperator{\mysemi}{{\color{programOrange}\fatsemi}}
\newcommand{\rename}[2]{#1_{\langle #2 \rangle}}
\newcommand\xqed[1]{%
	\leavevmode\unskip\penalty9999 \hbox{}\nobreak\hfill
	\quad\hbox{#1}}
\newcommand\demo{\xqed{$\triangle$}}
\definecolor{plotRed}{HTML}{C1272D}
\definecolor{plotBlue}{HTML}{0000A7}
\newcommand{\rel}[4]{\boldsymbol{\langle} #1 \boldsymbol{\rangle} #2 \sim #3 \boldsymbol{\langle} #4 \boldsymbol{\rangle}}
\newcommand{\relBig}[4]{\Big\langle #1 \Big\rangle \, #2 \sim #3 \Big\langle #4 \Big\rangle}
\newcommand{\relBigg}[4]{\Bigg\langle #1 \Bigg\rangle  #2\sim #3  \Bigg\langle #4 \Bigg\rangle}
\newcommand{\tripH}[3]{{\{} #1 {\}} #2 {\{} #3 {\}}}
\newcommand{\tripN}[3]{ \boldsymbol{[} \,#1\, \boldsymbol{]} #2 \boldsymbol{[} \,#3\, \boldsymbol{]} }
\newcommand{\univC}{\overline{\chi_\forall} }
\newcommand{\existsC}{\overline{\chi_\exists} }
\newcommand\ScaleExists[1]{\vcenter{\hbox{\scalefont{#1}$\exists$}}}
\newcommand\ScaleForall[1]{\vcenter{\hbox{\scalefont{#1}$\forall$}}}
\DeclareMathOperator*\bigexists{%
	\vphantom\sum
	\mathchoice{\ScaleExists{1.7}}{\ScaleExists{1.4}}{\ScaleExists{1}}{\ScaleExists{0.75}}}
\DeclareMathOperator*\bigforall{%
	\vphantom\sum
	\mathchoice{\ScaleForall{1.7}}{\ScaleForall{1.4}}{\ScaleForall{1}}{\ScaleForall{0.75}}}
\DeclareRobustCommand\bigop[1]{%
	\mathop{\vphantom{\sum}\mathpalette\bigop@{#1}}\slimits@
}
\newcommand{\bigop@}[2]{%
	\vcenter{%
		\sbox\z@{$#1\sum$}%
		\hbox{\resizebox{\ifx#1\displaystyle.9\fi\dimexpr\ht\z@+\dp\z@}{!}{$\m@th#2$}}%
	}%
}
\newcommand{\bigoast}{\DOTSB\bigop{\oast}}
\newcommand{\HyPA}{\texttt{HyPA}}
\newcommand{\PCSAT}{\texttt{PCSat}}
\newcommand{\HyPro}{\texttt{HyPro}}
\newcommand{\tool}{\texttt{ForEx}}
\definecolor{dkcyan}{rgb}{0.1, 0.3, 0.3}
\definecolor{dkgreen}{rgb}{0,0.3,0}
\colorlet{comment-color}{black!50}
\lstdefinelanguage{code-lang}{
	keywords={let, in, match, with, when, if, then, else, elif, for, to, do, return, from, def, :=, max},
	keywordstyle=[1]\bfseries,
	morekeywords=[2]{genpp, genppLoops},
	keywordstyle=[2]\color{dkgreen},
	morekeywords=[3]{guessInvariantAndCounts,freshParameter},
	keywordstyle=[3]\color{dkcyan},
	comment=[l][\color{comment-color}]{//},
	literate=%
	{:=}{{{\bfseries:=}}}1
	{:}{{{\bfseries:}}}1
	{@}{ }1
}
\lstdefinestyle{code-style}{
	escapeinside={(*}{*)},
	basicstyle=\ttfamily\fontsize{8}{9.6}\selectfont,
	columns=fullflexible,
	commentstyle=\sffamily\color{black!50!white},
	framexleftmargin=1em,
	framexrightmargin=1ex,
	keepspaces=true,
	keywordstyle=\color{dkblue},
	mathescape,
	numbers=left,
	numberblanklines=false,
	numbersep=0.5em,
	numberstyle=\relscale{0.65}\color{gray}\ttfamily,
	showstringspaces=true,
	stepnumber=1,
	xleftmargin=1.2em,
}
\lstdefinestyle{code-style-large}{
	escapeinside={(*}{*)},
	basicstyle=\ttfamily\fontsize{8.4}{9.7}\selectfont,
	columns=fullflexible,
	commentstyle=\sffamily\color{black!50!white},
	%escapechar=\#,
	framexleftmargin=1em,
	framexrightmargin=1ex,
	keepspaces=true,
	keywordstyle=\color{dkblue},
	mathescape,
	numbers=left,
	numberblanklines=false,
	numbersep=0.5em,
	numberstyle=\relscale{0.75}\color{gray}\ttfamily,
	showstringspaces=true,
	stepnumber=1,
	xleftmargin=1.2em,
}
\lstdefinelanguage{example-lang}{
	keywords={while,do,if, then,else},
	keywordstyle=[1]{\color{programOrange}},
	comment=[l][\color{comment-color}]{//},
	literate=%
	{=}{{{\color{programOrange}=}}}1
	{>}{{{\color{programOrange}>}}}1
	{+}{{{\color{programOrange}+}}}1
	{-}{{{\color{programOrange}-}}}1
	{*}{{{\color{programOrange}*}}}1
	{@}{ }1
}
\lstdefinestyle{example-style}{
	escapeinside={(*}{*)},
	basicstyle=\ttfamily\fontsize{9}{10.8}\selectfont,
	columns=fullflexible,
	commentstyle=\sffamily\color{black!50!white},
	framexleftmargin=0em,
	framexrightmargin=0ex,
	keepspaces=true,
	keywordstyle=\color{dkblue},
	mathescape,
	numbers=left,
	numberblanklines=false,
	showstringspaces=true,
	stepnumber=1,
	xleftmargin=0em,
	numbers=none
}
\newsavebox{\topprooftreebox}
\newlength{\topprooftreewidth}
\NewDocumentEnvironment{topprooftree}{m}%
{\begin{lrbox}{\topprooftreebox}\ignorespaces}%
	{\DisplayProof\end{lrbox}\begin{center}\settowidth{\topprooftreewidth}%
		{\topprooftreebox}\makebox[\topprooftreewidth]{%
			\minibox{{#1}\\\usebox{\topprooftreebox}}}\end{center}}
\newcounter{mylabelcounter}
\newcommand{\labelText}[2]{%
	#1\refstepcounter{mylabelcounter}%
	\immediate\write\@auxout{%
		\string\newlabel{#2}{{1}{\thepage}{{\unexpanded{#1}}}{mylabelcounter.\number\value{mylabelcounter}}{}}%
	}%
}
\newcommand{\labelTextGrey}[2]{%
	{\color{black!50}#1}\refstepcounter{mylabelcounter}%
	\immediate\write\@auxout{%
		\string\newlabel{#2}{{1}{\thepage}{{\unexpanded{#1}}}{mylabelcounter.\number\value{mylabelcounter}}{}}%
	}%
}
\newif\iffullversion
\newcommand{\ifFull}[2]{\iffullversion#1\else#2\fi}
\begin{document}
\title{Automated Software Verification of Hyperliveness}

\author{Raven Beutner \orcidlink{0000-0001-6234-5651}}
\institute{CISPA Helmholtz Center for Information Security, Germany\\
\email{raven.beutner@cispa.de}}
\authorrunning{R.~Beutner}
% First names are abbreviated in the running head.
% If there are more than two authors, 'et al.' is used.

\maketitle
\begin{abstract}
Hyperproperties relate multiple executions of a program and are commonly used to specify security and information-flow policies.
Most existing work has focused on the verification of $k$-safety properties, i.e., properties that state that \emph{all} $k$-tuples of execution traces satisfy a given property. 
In this paper, we study the automated verification of richer properties that combine universal and existential quantification over executions. 
Concretely, we consider $\forall^k\exists^l$ properties, which state that for all $k$ executions, there exist $l$ executions that, together, satisfy a property. 
This captures important non-$k$-safety requirements, including hyperliveness properties such as generalized non-interference, opacity,  refinement, and robustness. 
We design an automated constraint-based algorithm for the verification of $\forall^k\exists^l$ properties.
Our algorithm leverages a sound-and-complete program logic and a (parameterized) strongest postcondition computation.
We implement our algorithm in a tool called \tool{} and report on encouraging experimental results. 

\keywords{Hyperproperties  \and Program Logic \and Hoare Logic \and Symbolic Execution \and Constraint-based Verification \and Predicate Transformer \and Refinement \and  Strongest Postcondition  \and Underapproximation.}
\end{abstract}

\section{Introduction}\label{sec:intro}

Relational properties (also called hyperproperties \cite{ClarksonS10}) move away from a traditional specification that considers all executions of a system in isolation and, instead, relate \emph{multiple} executions. 
Hyperproperties are becoming increasingly important and have shown up in various disciplines, perhaps most prominently in information-flow control.
Assume we are given a program $\prog$ with high-security input $h$, low-security input $l$, and public output $o$, and we want to formally prove that the output of $\prog$ does not leak information about $h$.
One way to ensure this is to verify that $\prog$ behaves deterministically in the low-security input $l$, i.e., if the low-security input is identical across \emph{two} executions, so is $\prog$'s output. 

The above property is a typical example of a $2$-safety property stating a requirement on all pairs of traces.
More generally, a $k$-safety property requires that \emph{all} $k$-tuples of executions, together, satisfy a given property. 
In the last decade, many approaches for the verification of $k$-safety properties have been proposed, based, e.g., on model-checking \cite{ShemerGSV19,FinkbeinerRS15,FarzanV19}, abstract interpretation \cite{MastroeniP18,KovacsSF13,AssafNSTT17,MastroeniP19}, symbolic execution \cite{FarinaCG19}, or program logics \cite{Benton04,SousaD16,DOsualdoFD22,Yang07,NagasamudramN21}.

\begin{wrapfigure}{R}{0.35\linewidth}
\newsavebox{\mybox}
\begin{lrbox}{\mybox}
\begin{exampleCode}
if ($h$ > $l$) then 
@@$o$ = $l$ + $\star_\mathbb{N}$
else 
@@$x$ = $\star_\mathbb{N}$
@@if ($x$ > $l$) then 
@@@@$o$ = $x$
@@else 
@@@@$o$ = $l$
\end{exampleCode}
\end{lrbox}
\centering
\scalebox{1}{\usebox{\mybox}}
	\vspace{-2mm}
	\caption{Example program}\label{fig:intro-example}
	\vspace{-4mm}
\end{wrapfigure}

However, for many relational properties, the implicit \emph{universal} quantification found in $k$-safety properties is too restrictive. 
Consider the simple program in \Cref{fig:intro-example} (taken from \cite{BeutnerF22}), where $\star_\mathbb{N}$ denotes the nondeterministic choice of a natural number.
This program clearly violates the $2$-safety property discussed above as the nondeterminism influences the final value of $o$.
Nevertheless, the program does not leak any information about the secret input $h$.
To see this, assume the attacker observes some fixed low-security input-output pair $(l, o)$, i.e., the attacker observes everything except the high-security input.
The key observation is that $(l, o)$ is possible for any possible high-security input, i.e., for every value of $h$, there \emph{exists} some way to resolve the nondeterminism such that $(l, o)$ is the observation made by the attacker.
This information-flow policy -- called generalized non-interference (GNI) \cite{McCullough88} -- requires a combination of universal and existential reasoning and thus cannot be expressed as a $k$-safety property.

\paragraph{FEHTs.}

In this paper, we study the automated verification of such (functional) $\forall^*\exists^*$ properties. 
Concretely, we consider specifications in a form we call Forall-Exist Hoare Tuples (FEHT) (also called \emph{refinement quadruples} \cite{BartheCK13} or \emph{RHLE triples} \cite{DickersonYZD22}), which have the form
\begin{align*}
	\rel{\Phi}{\prog_1 \oast \cdots \oast \prog_k}{\prog_{k+1} \oast \cdots \oast \prog_{k+l}}{\Psi},
\end{align*}
where $\prog_1, \ldots, \prog_{k+l}$ are (possibly identical) programs and $\Phi, \Psi$ are first-order formulas that relate $k+l$ different program runs.
The FEHT is valid if for \emph{all} $k+l$ initial states that satisfy $\Phi$, and for \emph{all} possible executions of $\prog_1, \ldots, \prog_k$ there \emph{exist} executions of $\prog_{k+1}, \ldots, \prog_{k+l}$ such that the final states satisfy $\Psi$.
For example, GNI can be expressed as $\rel{l_1=l_2}{\prog}{\prog}{o_1=o_2}$, where $l_1$ and $o_1$ (resp.~$l_2$ and $o_2$) refer to the value of $l$ and $o$ in the first (resp.~second) program copy.
That is, for \emph{any} two initial states $\sigma_1, \sigma_2$ with identical values for $l$ (but possibly different values for $h$), and \emph{any} final state $\sigma_1'$ reachable by executing $\prog$ from $\sigma_1$, there \emph{exists} some final state $\sigma_2'$ (reachable from $\sigma_2$ by executing $\prog$) that agrees with $\sigma_1'$ in the value of $o$.
The program in \Cref{fig:intro-example} satisfies this FEHT.
In the terminology of Clarkson and Schneider \cite{ClarksonS10}, GNI is a \emph{hyperliveness} property, hence the name of our paper. 
Intuitively, the term hyperliveness stems from the fact that -- due to the existential quantification in FEHTs -- GNI  reasons about the existence of a particular execution. 
Similar to the definition of liveness in temporal properties \cite{AlpernS85}, we can, therefore, satisfy GNI by \emph{adding} sufficiently many execution traces \cite{CoenenFST19}.

\paragraph{Verification Using a Program Logic.}

For  \emph{finite}-state hardware systems, many automated verification methods for hyperliveness properties (e.g., in the form of FEHTs) have been proposed \cite{ClarksonFKMRS14,HsuSB21,BeutnerFFM23,FinkbeinerRS15,BeutnerF23,BeutnerF23b,CoenenFST19}.
In contrast, for \emph{infinite}-state software, the verification of FEHTs is notoriously difficult; FEHTs mix quantification of different types, so we cannot employ purely over-approximate reasoning principles (as is possible for $k$-safety).
Most existing approaches for software verification, therefore, require substantial user interaction, e.g., in the form of a custom Horn-clause template \cite{UnnoTK21}, a user-provided abstraction \cite{BeutnerF22}, or a deductive proof strategy \cite{DickersonYZD22,BartheCK13}.
See \Cref{sec:relatedWork} for more discussion.

In this paper, we put forward an automatic algorithm for the verification of FEHTs.
Our method is rooted in a novel \emph{program logic}, which we call Forall-Exist Hoare Logic (FEHL) (in \Cref{sec:FEHL}).  
Similar to many program logics for $k$-safety properties \cite{SousaD16,ChenFD17}, our logic focuses on one of the programs involved in the verification at any given time (by, e.g., symbolically executing one step in one of the programs) and thus lends itself to automation. 
We show that FEHL is sound and complete (relative to a complete proof system for over- and under-approximate unary Hoare triples).

\paragraph{Automated Verification.}

Our verification algorithm -- presented in \Cref{sec:algorithm} -- then leverages FEHL for the analysis of FEHTs.
During this analysis, the key algorithmic challenge is to find suitable instantiations for nondeterministic choices made in existentially quantified executions. 
Our algorithm avoids a direct instantiation and instead treats the outcome of the nondeterministic choice \emph{symbolically}, allowing an instantiation at a later point in time.
Formally, we define the concept of a \emph{parametric assertion}. 
Instead of capturing a set of states, a parametric assertion defines a function that maps concrete values for a set of parameters (in our case, the nondeterministic choices in existentially quantified programs whose concrete instantiations we have postponed) to sets of states. 
Our algorithm then recursively computes a \emph{parametric postcondition} and delegates the search for appropriate instantiations of the parameters to an SMT solver. 
Crucially, our algorithm only explores a restricted class of program alignments (as guided by FEHL).
Therefore, the resulting constraints are ordinary (first-order) SMT formulas, which can be handled using off-the-shelf SMT solvers.

\paragraph{Implementation and Experiments.}

We implement our algorithm in a tool called \tool{} and compare it with existing approaches for the verification of $\forall^*\exists^*$ properties (in \Cref{sec:implementation}).
As \tool{} can resort to highly optimized off-the-shelf SMT solvers, it outperforms existing approaches (which often rely on custom solving strategies) in many benchmarks.

\section{Preliminaries}\label{sec:prelim}

\paragraph{Programs.}

Let $\progVars$ be a set of program variables. 
We consider a simple (integer-valued) programming language generated by the following grammar.
\begin{align*}
	\prog, \progg := \myskip \mid x \myassign e \mid \myassume(b) \mid \myif(b, \prog, \progg) \mid \mywhile(b, \prog) \mid \prog \mysemi \progg \mid x \myassign \star
\end{align*}
where $x \in \progVars$ is a variable, $e$ is a (deterministic) arithmetic expressions over variables in $\progVars$, and $b$ is a (deterministic) boolean expression.
$\myskip{}$ denotes the program that does nothing; $x \myassign e$ assigns $x$ the result of evaluating $e$; $\myassume(b)$ assumes that $b$ holds, i.e., does not continue execution from states that do not satisfy $b$; $\myif(b, \prog, \progg)$ executes $\prog$ if $b$ holds and otherwise executes $\progg$; $\mywhile(b, \prog)$ executes $\prog$ as long as $b$ holds; $\prog\mysemi \progg$ executes $\prog$ followed by $\progg$; and $x \myassign \star$ assigns $x$ some nondeterministically chosen integer. 
For an arithmetic expression $e$, we write $\mathit{Vars}(e) \subseteq \progVars$ for the set of all variables used in the expression.

We endow our language with a standard operational semantics operating on states $\sigma : \progVars \to \intSet$.
Given a program $\prog$, we write $\sem{\prog}(\sigma, \sigma')$ whenever $\prog$ -- when executed from state $\sigma$ -- \emph{can} terminate in state $\sigma'$.
Our semantics is defined as expected, and we give a full definition in \ifFull{\Cref{app:semantics}}{the full version \cite{full}}.

Given program states $\sigma_1 : \progVars \to \intSet$ and $\sigma_2 : \progVars' \to \intSet$ with $\progVars \cap \progVars' = \emptyset$, we write $\sigma_1 \oplus \sigma_2 : (\progVars \cup \progVars') \to \intSet$ for the combined state, that behaves as $\sigma_1$ on $\progVars$ and as $\sigma_2$ on $\progVars'$.
For $i \in \nat$, we define $\progVars_i := \{x_i \mid x \in \progVars\}$ as a set of indexed program variables. 

\paragraph{Assertions.}

An assertion $\Phi$ is a first-order formula over variables in $\progVars$ (or in the relational setting over $\bigcup_{i=1}^k \progVars_i$ for some $k$).
Given a state $\sigma$, we write $\sigma \models \Phi$ if $\sigma$ satisfies $\Phi$.
We assume that assertions stem from an arbitrarily expressive background theory such that every set of states can be expressed as a formula.
This allows us to sidestep the issue of \emph{expressiveness} in the sense of Cook \cite{Cook78} (see, e.g., \cite{OHearn20,Yang07,SousaD16} for similar treatments).

\paragraph{Hyperliveness Specifications.}

Our verification algorithm targets specifications that combine universal and existential quantification, similar to \emph{RHLE triples} \cite{DickersonYZD22} and \emph{refinement quadruples} \cite{BartheCK13}:

\begin{definition}
	A Forall-Exist Hoare Tuple (FEHT) has the form 
	\begin{align*}
		\rel{\Phi}{\prog_1 \oast \cdots \oast \prog_{k}}{\prog_{k+1} \oast \cdots \oast \prog_{k+l}}{\Psi},
	\end{align*}
	where $\Phi, \Psi$ are assertions over $\bigcup_{i=1}^{k+l} \progVars_i$, and $\prog_1, \ldots, \prog_{k+l}$ are programs over variables $\progVars_1, \ldots, \progVars_{k+l}$, respectively. 
	The FEHT is \emph{valid} if for all states $\sigma_1, \ldots, \sigma_{k+l}$ (with domains $\progVars_1, \ldots, \progVars_{k+l}$, respectively) and  $\sigma'_1, \ldots, \sigma'_{k}$ such that $\bigoplus_{i=1}^{k+l} \sigma_i \models \Phi$ and $\sem{\prog_i}(\sigma_i, \sigma'_i)$ for all $i \in [1,k]$, there exist states $\sigma'_{k+1}, \ldots, \sigma'_{k+l}$ such that $\sem{\prog_i}(\sigma_i, \sigma'_i)$ for all $i \in [k+1,k+l]$ and $\bigoplus_{i=1}^{k+l} \sigma'_i \models \Psi$.
\end{definition}

That is, we quantify universally over initial states for all $k+l$ programs (under the assumption that they, together, satisfy $\Phi$) and also universally over executions of $\prog_1, \ldots, \prog_k$.
Afterward, we quantify \emph{existentially} over executions of $\prog_{k+1}, \ldots, \prog_{k+l}$ and require that the final states of all $k+l$ executions, together, satisfy the postcondition $\Psi$.
A relational property usually refers to $k+l$ executions of the \emph{same} program $\prog$ (operating on variables in $\progVars$); we can model this by using $\alpha$-renamed copies $\rename{\prog}{1}, \ldots, \rename{\prog}{k+l}$ where each $\rename{\prog}{i}$ is obtained from $\prog$ by replacing each variable $x \in \progVars$ with $x_i \in \progVars_i$.
FEHTs capture a range of important properties, including e.g., non-inference \cite{McLean94}, opacity \cite{ZhangYZ19}, GNI \cite{McCullough88}, refinement \cite{Wirth71}, software doping \cite{BiewerDFGHHM22}, and robustness \cite{ChaudhuriGL12}.
It is easy to see that FEHTs can also express (purely universal) $k$-safety properties over programs $\prog_1, \ldots, \prog_k$ as $\rel{\Phi}{\prog_1 \oast \cdots \oast \prog_{k}}{\epsilon}{\Psi}$, where $\epsilon$ denotes the empty sequence of programs.

\section{Forall-Exist Hoare Logic}\label{sec:FEHL}

\newcommand{\greyName}[1]{{\color{black!50}{\texttt{#1}}}}

\begin{figure}[!t]
	\def\ScoreOverhang{2pt}
	\footnotesize
	\vspace{-3mm}

	\begin{minipage}{0.33\linewidth}
		\centering
		\scalebox{0.9}{\parbox{\linewidth}{
				\def\defaultHypSeparation{\hskip .2in}
				\begin{topprooftree}{\labelText{\texttt{($\forall$-Reorder)}}{rule:forall-comm}}
					\AxiomC{$\vdash\rel{\Phi}{\univC_2 \oast \univC_1}{\existsC}{\Psi}$}
					\UnaryInfC{$\vdash\rel{\Phi}{\univC_1 \oast \univC_2}{\existsC}{\Psi}$}
				\end{topprooftree}
		}}
	\end{minipage}%
	\begin{minipage}{0.33\linewidth}
		\centering
		\scalebox{0.9}{\parbox{\linewidth}{
				\def\defaultHypSeparation{\hskip .2in}
				\begin{topprooftree}{\labelText{\texttt{($\forall$-Skip-I)}}{rule:forall-intro}}
					\AxiomC{$\vdash\rel{\Phi}{\prog\mysemi\myskip \oast \univC}{\existsC}{\Psi}$}
					\UnaryInfC{$\vdash\rel{\Phi}{\prog \oast \univC}{\existsC}{\Psi}$}
				\end{topprooftree}
		}}
	\end{minipage}%
	\begin{minipage}{0.33\linewidth}
		\centering
		\scalebox{0.9}{\parbox{\linewidth}{
				\def\defaultHypSeparation{\hskip .2in}
				\begin{topprooftree}{\labelText{\texttt{($\forall$-Skip-E)}}{rule:forall-elim}}
					\AxiomC{$\vdash \rel{\Phi}{\univC}{\existsC}{\Psi}$}
					\UnaryInfC{$\vdash \rel{\Phi}{\myskip \oast \univC}{\existsC}{\Psi}$}
				\end{topprooftree}
		}}
	\end{minipage}

	\begin{minipage}{0.39\linewidth}
		\centering
		\scalebox{0.88}{\parbox{\linewidth}{
				\def\defaultHypSeparation{\hskip .2in}
				\begin{topprooftree}{\labelText{\texttt{($\forall$-If)}}{rule:forall-if}}
					\AxiomC{\stackanchor{$\vdash\rel{\Phi \land b}{\prog_1\mysemi \prog_3 \oast \univC}{\existsC}{\Psi}$}{$\vdash\rel{\Phi \land \neg b}{\prog_2\mysemi \prog_3 \oast \univC}{\existsC}{\Psi}$}}
					\UnaryInfC{$\vdash\rel{\Phi}{\myif(b, \prog_1, \prog_2)\mysemi \prog_3 \oast \univC}{\existsC}{\Psi}$}
				\end{topprooftree}
		}}
	\end{minipage}%
	\begin{minipage}{0.31\linewidth}
		\centering
		\scalebox{0.88}{\parbox{\linewidth}{
				\def\defaultHypSeparation{\hskip .2in}
				\begin{topprooftree}{\labelText{\texttt{($\forall$-Step)}}{rule:forall-step}}
					\AxiomC{\stackanchor{$\vdash\tripH{\Phi}{\prog_1}{\Phi'}$}{$\vdash\rel{\Phi'}{\prog_2 \oast \univC}{\existsC}{\Psi}$}}
					\UnaryInfC{$\vdash\rel{\Phi}{\prog_1 \mysemi \prog_2 \oast \univC}{\existsC}{\Psi}$}
				\end{topprooftree}
		}}
	\end{minipage}
		\begin{minipage}{0.29\linewidth}
		\centering
		\scalebox{0.88}{\parbox{\linewidth}{
				\def\defaultHypSeparation{\hskip .2in}
				\begin{topprooftree}{\labelText{\texttt{($\exists$-Step)}}{rule:exists-step}}
					\AxiomC{\stackanchor{$\vdash\tripN{\Phi}{\prog_1}{\Phi'}$}{$\vdash\rel{\Phi'}{\univC}{\prog_2 \oast \existsC}{\Psi}$}}
					\UnaryInfC{$\vdash\rel{\Phi}{\univC}{\prog_1\mysemi\prog_2 \oast \existsC}{\Psi}$}
				\end{topprooftree}
		}}
	\end{minipage}%
	
	\vspace{-2mm}

	\begin{minipage}{0.2\linewidth}
		\centering
		\scalebox{0.9}{\parbox{\linewidth}{
				\begin{topprooftree}{\labelText{\texttt{(Done)}}{rule:done}}
					\AxiomC{$\vdash \rel{\Phi}{\epsilon}{\epsilon}{\Phi}$}
				\end{topprooftree}
		}}
	\end{minipage}
	\begin{minipage}{0.4\linewidth}
		\vspace{3mm}
		\centering
		\scalebox{0.9}{\parbox{\linewidth}{
				\def\defaultHypSeparation{\hskip .2in}
				\begin{topprooftree}{\labelText{\texttt{($\forall$-Assume)}}{rule:forall-assume}}
					\AxiomC{$\vdash\rel{\Phi \land b}{\prog \oast \univC}{\existsC}{\Psi}$}
					\UnaryInfC{$\vdash\rel{\Phi}{\myassume(b) \mysemi \prog \oast \univC}{\existsC}{\Psi}$}
				\end{topprooftree}
		}}
	\end{minipage}%
	\begin{minipage}{0.4\linewidth}
		\centering
		\scalebox{0.9}{\parbox{\linewidth}{
				\vspace{3.3mm}
				\def\defaultHypSeparation{\hskip .2in}
				\begin{topprooftree}{\labelText{\texttt{($\exists$-Assume)}}{rule:exists-assume}}
					\AxiomC{$\Phi \Rightarrow b$}
					\AxiomC{$\vdash\rel{\Phi}{\univC}{\prog \oast \existsC}{\Psi}$}
					\BinaryInfC{$\vdash\rel{\Phi}{\univC}{\myassume(b) \mysemi \prog \oast \existsC}{\Psi}$}
				\end{topprooftree}
		}}
	\end{minipage}%
	
	\vspace{-2mm}
	
	\begin{minipage}{0.37\linewidth}
		\vspace{3mm}
		\centering
		\scalebox{0.9}{\parbox{\linewidth}{
				\def\defaultHypSeparation{\hskip .05in}
				\begin{topprooftree}{\labelText{\texttt{($\forall$-Choice)}}{rule:forall-inf-nd}}
					\AxiomC{$\vdash\rel{\exists x\ldot \Phi}{\prog \oast \univC}{\existsC}{\Psi}$}
					\UnaryInfC{$\vdash\rel{\Phi}{x \myassign \star\mysemi \prog \oast \univC}{\existsC}{\Psi}$}
				\end{topprooftree}
		}}
	\end{minipage}
	\begin{minipage}{0.63\linewidth}
		\centering
		\scalebox{0.9}{\parbox{\linewidth}{
				\vspace{3.5mm}
				\def\defaultHypSeparation{\hskip .2in}
				\begin{topprooftree}{\labelText{\texttt{($\exists$-Choice)}}{rule:exists-inf-nd}}
					\AxiomC{$x \not\in \mathit{Vars}(e)$}
					\AxiomC{$\vdash\rel{(\exists x\ldot \Phi) \land x = e}{\univC}{\prog \oast \existsC}{\Psi}$}
					\BinaryInfC{$\vdash\rel{\Phi}{\univC}{x \myassign \star\mysemi \prog  \oast \existsC}{\Psi}$}
				\end{topprooftree}
		}}
	\end{minipage}

	\caption{Selection of core proof rules of FEHL}\label{fig:core_rules}
\end{figure}

The verification steps of our constraint-based algorithm (presented in \Cref{sec:algorithm}) are guided by the proof rules of a novel program logic operating on FEHTs, which we call Forall-Exist Hoare Logic (FEHL).

\subsection{Core Rules}

We depict a selection of core rules in \Cref{fig:core_rules}; a full overview can be found in \ifFull{\Cref{app:fehl}}{\cite{full}}.
We write $\univC$ (resp.~$\existsC$) to abbreviate a list $\prog_1 \oast \cdots \oast \prog_k$ of programs that are universally (resp.~existentially) quantified.
Rule \nameref{rule:forall-comm} allows for the reordering of universally quantified programs; \nameref{rule:forall-intro} rewrites a program $\prog$ into $\prog \mysemi \myskip$; \nameref{rule:forall-elim} removes a single $\myskip$-instruction; and \nameref{rule:done} derives a FEHL with an empty program sequence. 
Using $\myskip$-insertions and reordering (and the analogous rules for existentially quantified programs), we can always bring a program in the form $\prog_1 \mysemi \prog_2$, targeted by the remaining rules. 
Rule \nameref{rule:forall-if} embeds the branching condition of a conditional into the preconditions of both branches. 
Rules \nameref{rule:forall-step} and \nameref{rule:exists-step} allow us to resort to unary reasoning over parts of the program.
These rules make the multiplicity of techniques  developed for unary reasoning (e.g., symbolic execution \cite{King76} and predicate transformers \cite{0067387}) applicable to the verification of hyperproperties in the form of FEHTs.
For universally quantified programs of the form $\prog_1 \mysemi \prog_2$,  \nameref{rule:forall-step} requires an auxiliary assertion $\Phi'$ that should hold after \emph{all} executions of $\prog_1$ from $\Phi$.  
We can express this using the standard (non-relational) Hoare triple (HT) $\tripH{\Phi}{\prog_1}{\Phi'}$ \cite{Hoare69}.
The second premise then ensures that the remaining FEHT (after $\prog_1$ has been executed) holds.  
For existentially quantified programs, we, instead, employ an underapproximation. 
In \nameref{rule:exists-step}, we, again, execute $\prog_1$ but use an \emph{Under-Approximate Hoare triple} (UHT) $\tripN{\Phi}{\prog_1}{\Phi'}$.
The UHT $\tripN{\Phi}{\prog_1}{\Phi'}$ holds if for all states $\sigma$ with $\sigma \models \Phi$, there \emph{exists} a state $\sigma'$ such that $\sem{\prog_1}(\sigma, \sigma')$ and $\sigma' \models \Phi'$.

\begin{remark}
	UHTs behave similar to \emph{Incorrectness Triples} (ITs) \cite{OHearn20,VriesK11} in that they reason about the existence of a particular set of executions. 
	The key difference is that ITs reason backward (all states in $\Phi'$ are reachable from some state in $\Phi$), whereas UHTs reason in a forward direction (all states in $\Phi$ can reach $\Phi'$).
	See, e.g., \emph{Lisbon Triples} \cite[\S 5]{MollerOH21} and \emph{Outcome Triples} \cite{ZilbersteinDS23} for related approaches.
	We will later show that FEHL is complete when equipped with some complete proof system for UHTs (cf.~\Cref{theo:completeness}). 
	In \ifFull{\Cref{app:ehl}}{the full version \cite{full}}, we show that there exists at least one complete proof system for UHTs.
	 \demo
\end{remark}

For $\myassume$ statements, \nameref{rule:forall-assume} strengthens the precondition by the assumed expression $b$; any state that does not satisfy $b$ causes a (universally quantified) execution to halt and renders the FEHT vacuously valid. 
In contrast, \nameref{rule:exists-assume} assumes that all states in $\Phi$ satisfy $b$; if any state in $\Phi$ does not satisfy $b$, the FEHT is invalid.
Likewise, the handling of a nondeterministic assignment $x \myassign \star$  differs based on whether we consider a universally quantified or existentially quantified program. 
In the former case, \nameref{rule:forall-inf-nd} removes all knowledge about the value of $x$ within the precondition by quantifying $x$ existentially (thus enlarging the precondition).
In the latter (existentially quantified) case, we can, in a forward-style execution, choose \emph{any} concrete value for $x$.
\nameref{rule:exists-inf-nd} formalizes this intuition: we first invalidate all knowledge about $x$ and then assert that $x = e$ for some arbitrary expression $e$ that does not depend on $x$. 
In our automated analysis (cf.~\Cref{sec:algorithm}), we use \nameref{rule:exists-inf-nd}, but -- instead of fixing some concrete value (or expression) at application time -- we postpone the concrete instantiation by treating the value \emph{symbolically}.

\subsection{Asynchronous Loop Reasoning}\label{sec:loops}

\begin{figure}[!t]
	\def\ScoreOverhang{2pt}
	\centering
	\centering
	\scalebox{0.9}{\parbox{\linewidth}{
			\begin{topprooftree}{\labelText{\texttt{(Loop-Counting)}}{rule:loop-count}}
				\def\defaultHypSeparation{\hskip .15in}
				\AxiomC{\stackanchor{\stackanchor{\stackanchor{$k \geq 1$, $B \geq 1$}{$c_1, \ldots, c_{k+l} \in [1, B]$}}{$\mathbb{I}_1, \ldots, \mathbb{I}_{B+1}$}}{\stackanchor{$\Phi \Rightarrow \mathbb{I}$}{\stackanchor{$\mathbb{I} \Rightarrow \bigwedge_{i=2}^{k+l} (b_1 \leftrightarrow b_i)$}{$\mathbb{I} = \mathbb{I}_1 = \mathbb{I}_{B+1}$}}}}
				\AxiomC{\stackanchor
					{$\left[ \vdash\relBig{\mathbb{I}_j \land \bigwedge\limits_{\substack{i = 1\mid c_i \geq j}}^{k+l} \!\!\!b_i}{\bigoast\limits_{\substack{i=1 \mid c_i \geq j}}^{k}  \prog_i\;}{\bigoast\limits_{\substack{i=k+1 \mid c_i \geq j}}^{k+l} \prog_i}{\mathbb{I}_{j+1} \land  \bigwedge\limits_{\substack{i = 1\mid c_i > j}}^{k+l} \!\!\! b_i}\right]_{j=1}^{B}$}
					{$\vdash\relBig{\mathbb{I} \land \bigwedge\limits_{i=1}^{k+l} \neg b_i}{\bigoast\limits_{i=1}^{k} \progg_i \oast \univC\;}{\bigoast\limits_{i=k+1}^{k+l}\progg_i \oast \existsC}{\Psi}$}
				}
				\BinaryInfC{$\vdash \rel{\Phi}{ \bigoast\limits_{i=1}^k \mywhile(b_i, \prog_i) \mysemi \progg_i \oast \univC }{\bigoast\limits_{i=k+1}^{k+l} \mywhile(b_i, \prog_i) \mysemi \progg_i \oast \existsC}{\Psi}$}
			\end{topprooftree}
	}}
	
	\vspace{-3mm}
	
	\caption{Counting-based loop rule for FEHL }\label{fig:loop_rule_count}
\end{figure}

A particular challenge when reasoning about relational properties is the alignment of loops.
In FEHL, we propose a novel counting-based loop rule that supports asynchronous alignments while still admitting good automation.
Consider the rule \nameref{rule:loop-count} (in \Cref{fig:loop_rule_count}), which assumes $k \geq 1$ universally and $l$ existentially quantified loops. 
The rule requires a loop invariant $\mathbb{I}$ that \textbf{(1)} is implied by the precondition ($\Phi \Rightarrow \mathbb{I}$), \textbf{(2)} ensures simultaneous termination of all loops ($\mathbb{I} \Rightarrow \bigwedge_{i=2}^{k+l} (b_1 \leftrightarrow b_i)$), and \textbf{(3)} is strong enough to establish the postcondition for the program suffixes $\progg_1, \ldots, \progg_{k+l}$ executed after the loops.
The key difference from a simple synchronous traversal is that, in each ``iteration'', we execute the bodies of the loops for possibly different numbers of times. 
Concretely, \nameref{rule:loop-count} asks for natural numbers $c_1, \ldots, c_{k+l}$ (ranging between $1$ and some arbitrary upper bound $B$), and -- starting from the invariant $\mathbb{I}$ -- we execute each $\prog_i$ $c_i$ times.
Crucially, we need to make sure that each $\prog_i$ will execute \emph{at least} $c_i$ times, i.e., the guard $b_i$ holds after each of the first $c_i-1$ executions.
In particular, we cannot na\"ively analyze $c_i$ copies of $\prog_i$ composed via $\mysemi$ as this might introduce additional executions of $\prog_i$ that would not happen in $\mywhile(b_i, \prog_i)$.
To ensure this, \nameref{rule:loop-count} demands $B+1$ intermediate assertions $\mathbb{I}_1, \ldots, \mathbb{I}_{B+1}$.
In the $j$th iteration (for $1 \leq j \leq B$), we (symbolically) execute -- from $\mathbb{I}_j$ -- all loop bodies $\prog_i$ that we want to execute at least $j$ times (i.e., all loop bodies $\prog_i$ where $c_i \geq j$).
We require that \textbf{(1)} the postcondition $\mathbb{I}_{j+1}$ is derivable, and \textbf{(2)} the guards of all loops that we want to execute \emph{more} than $j$ times (i.e., loops where $c_i > j$) evaluate to true.

\begin{figure}[!t]

\newsavebox{\myboxi}
\newsavebox{\myboxii}
\begin{lrbox}{\myboxi}
\begin{exampleCode}
$y_1$ = $y_1$ - $1$$\mysemi$
$x_1$ = $4$ * $x_1$
\end{exampleCode}
\end{lrbox}
\begin{lrbox}{\myboxii}
\begin{exampleCode}
$z_2$ = $\star$$\mysemi$ 
$y_2$ = $y_2$ - $z_2$$\mysemi$ 
$x_2$ = $2$ * $x_2$
\end{exampleCode}
\end{lrbox}
	
\begin{minipage}[b]{0.35\linewidth}
\begin{lrbox}{\mybox}
\begin{exampleCode}
$y_1$ = $x_1$$\mysemi$
while ($y_1$ > $0$)
@@$y_1$ = $y_1$ - $1$$\mysemi$
@@$x_1$ = $4$ * $x_1$
\end{exampleCode}
\end{lrbox}

\begin{align*}
	\prog_1 := \begin{cases}
		\scalebox{0.95}{\text{\usebox{\mybox}}}
	\end{cases}
\end{align*}

\begin{lrbox}{\mybox}
\begin{exampleCode}
$y_2$ = $2$ * $x_2$$\mysemi$
while ($y_2$ > $0$)
@@$z_2$ = $\star$$\mysemi$ 
@@$y_2$ = $y_2$ - $z_2$$\mysemi$ 
@@$x_2$ = $2$ * $x_2$
\end{exampleCode}
\end{lrbox}

\begin{align*}
	\prog_2 := \begin{cases}
		\scalebox{0.95}{\text{\usebox{\mybox}}}
	\end{cases}
\end{align*}

\subcaption{}\label{fig:asynchronous-loop}

\end{minipage}%
\begin{minipage}[b]{0.65\linewidth}
	
	\begin{subfigure}{\linewidth}
		\centering
		\scalebox{0.95}{\parbox{\linewidth}{
				\begin{align*}
					\relBigg{\begin{matrix}
							\mathbb{I}_1 \, \land \\
							y_1 > 0 \, \land \\
							y_2 > 0
					\end{matrix}}{\begin{matrix}
							\scalebox{1}{\usebox{\myboxi}}
						\end{matrix}\;}{\begin{matrix}
							\scalebox{1}{\usebox{\myboxii}}
						\end{matrix}\;}{\begin{matrix}
							\mathbb{I}_2\, \land\\ y_2 > 0
					\end{matrix}}
				\end{align*}
		}}
		\subcaption{}\label{fig:step1}
	\end{subfigure}\\
	\begin{subfigure}{\linewidth}
		\centering
		\scalebox{0.95}{\parbox{\linewidth}{
				\begin{align*}
					\relBigg{\begin{matrix}
							\mathbb{I}_2 \, \land\\ y_2 > 0
					\end{matrix}}{\,\epsilon\,}{\begin{matrix}
							\scalebox{1}{\usebox{\myboxii}}
						\end{matrix}\;}{\begin{matrix}
							\mathbb{I}_3
					\end{matrix}}
				\end{align*}
		}}
		\subcaption{}\label{fig:step2}
	\end{subfigure}

\end{minipage}

	\caption{In \Cref{fig:asynchronous-loop}, we depict two example programs. In \Cref{fig:step1,fig:step2}, we give two intermediate FEHT verification obligations (cf.~\Cref{ex:counting-example}).}
\end{figure}

\begin{example}\label{ex:counting-example}
	Consider the two example programs $\prog_1, \prog_2$ in \Cref{fig:asynchronous-loop} and the FEHT $\rel{x_1=x_2}{\prog_1}{\prog_2}{x_1=x_2}$.
	To see that this FEHT is valid, we can, in each loop iteration, always choose $z_2 = 1$.
	In this case, $\prog_1$ quadruples the value of $x_1$ for $x_1$ times and $\prog_2$ doubles the value of $x_2$ for $2 x_2$ times, which, assuming $x_1 = x_2$, computes the same result ($x_1 = x_2 \rightarrow 4^{x_1} x_1 = 2^{2x_2}x_2$).
	Verifying this example automatically is challenging as both loops are executed a different number of times, so we cannot align the loops in lockstep.
	Likewise, computing independent (unary) summaries of both loops requires complex non-linear reasoning.
	Instead, \nameref{rule:loop-count} enables an asynchronous alignment:	
	After applying \nameref{rule:forall-step} and \nameref{rule:exists-step}, we are left with precondition $x_1 = x_2 \land y_2 = 2y_1$. 
	We use \nameref{rule:loop-count} and align the loops such that every loop iteration in $\prog_1$ is matched by \emph{two} iterations in $\prog_2$, which allows us to use a simple (linear) invariant.
	We set $c_1 := 1, c_2 := 2$ and define $\mathbb{I} := x_1 = x_2 \land y_2 = 2y_1$, $\mathbb{I}_1 := \mathbb{I}_3 := \mathbb{I}$, and $\mathbb{I}_2 := x_1 = 2x_2 \land y_2 = 2y_1 + 1$. 
	Note that $\mathbb{I}$ implies the desired postcondition ($x_1 = x_2$).
	To establish that $\mathbb{I}$ serves as an invariant, we need to discharge the two proof obligations depicted in \Cref{fig:step1,fig:step2}.
	The obligation in \Cref{fig:step1} (corresponding to iteration $j=1$) establishes that \textbf{(1)} $\mathbb{I}_2$ is a provable postcondition after executing both loop bodies from $\mathbb{I}_1$ and \textbf{(2)} that the loop in $\prog_2$ will execute at least one more time, i.e., $y_2 > 0$.
	We can easily discharge this FEHT using \nameref{rule:forall-step}, \nameref{rule:exists-step}, and \nameref{rule:exists-inf-nd} by choosing $z_2$ to be $1$ (note that if $y_2 = 2y_1$ and $y_2 > 0$, then $y_2 - 1 > 0$).
	The obligation in \Cref{fig:step2} corresponds to iteration $j = 2$, where we only execute the body of $\prog_2$.
	We can, again, easily discharge this FEHT using \nameref{rule:exists-step} and \nameref{rule:exists-inf-nd} (again, choosing $z_2$ to be $1$).
	\demo
\end{example}

\subsection{Soundness and Completeness}

We can show that our proof system is sound and complete:

\begin{restatable}[Soundness]{theorem}{soundness}\label{theo:sound}
	Assume that $\vdash \tripH{\cdot}{\cdot}{\cdot}$ and $\vdash \tripN{\cdot}{\cdot}{\cdot}$ are sound proof systems for HTs and UHTs, respectively. 
	If $\vdash \rel{\Phi}{\univC}{\existsC}{\Psi}$ then $\rel{\Phi}{\univC}{\existsC}{\Psi}$ is valid.
\end{restatable}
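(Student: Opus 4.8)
The plan is to proceed by induction on the structure (equivalently, the height) of the FEHL derivation of $\vdash \rel{\Phi}{\univC}{\existsC}{\Psi}$, with the semantic notion of validity from the definition of FEHTs as the property to establish for each derivable judgement. For each proof rule I would assume that every FEHT premise is valid (by the induction hypothesis), that every premise of the form $\tripH{\cdot}{\cdot}{\cdot}$ or $\tripN{\cdot}{\cdot}{\cdot}$ denotes a true judgement (by the assumed soundness of the two underlying proof systems), and that every side condition of the form $\Phi \Rightarrow b$ holds semantically; the goal in each case is to derive validity of the conclusion by unfolding the operational semantics $\sem{\cdot}$. The base case is \nameref{rule:done}, where both program lists are empty: here $\sem{\epsilon}$ is the identity relation, so the final states coincide with the initial ones and $\Psi = \Phi$ holds by assumption.

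The bulk of the rules reduce to routine manipulations of $\sem{\cdot}$ together with the induction hypothesis, and I would dispatch them in groups. The structural rules \nameref{rule:forall-comm}, \nameref{rule:forall-intro} and \nameref{rule:forall-elim} hold because permuting the quantified components and inserting or deleting a $\myskip$ (using $\sem{\prog \mysemi \myskip} = \sem{\prog}$ and $\sem{\myskip}(\sigma,\sigma') \iff \sigma = \sigma'$) leave the validity condition unchanged. \nameref{rule:forall-if} is a case split on the branch guard $b$ via the semantics of $\myif$, and \nameref{rule:forall-assume} (resp.\ \nameref{rule:exists-assume}) uses that a blocked universal execution makes the tuple vacuously valid (resp.\ that $\Phi \Rightarrow b$ prevents an existential execution from being blocked). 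The nondeterministic-choice rules are where the two quantifier directions diverge: \nameref{rule:forall-inf-nd} weakens the precondition to $\exists x \ldot \Phi$ because a universal $x \myassign \star$ may produce any value, whereas \nameref{rule:exists-inf-nd} lets us commit the existential witness to the value of $e$. The two step rules are the interface to unary reasoning and exhibit the same asymmetry: for \nameref{rule:forall-step}, soundness of the HT system gives the over-approximate guarantee that every execution of $\prog_1$ from $\Phi$ reaches the intermediate assertion $\Phi'$, after which I factor each universal execution of $\prog_1 \mysemi \prog_2$ through an intermediate state and invoke the induction hypothesis; for \nameref{rule:exists-step}, soundness of the UHT system instead supplies, for every state satisfying $\Phi$, an \emph{existing} execution of $\prog_1$ landing in $\Phi'$, which I use to construct the required existential witness.

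The genuine obstacle is the asynchronous loop rule \nameref{rule:loop-count}, and I expect essentially all of the work to concentrate there. The idea is an inner induction that replays the fixed universal loop executions in \emph{rounds}, where one round consists of the $B$ sub-step premises executed in sequence and advances each body $\prog_i$ by exactly $c_i$ iterations. Concretely, I would prove: from any combined state satisfying $\mathbb{I}$, and for every choice of universal loop executions, there exist existential executions yielding $\Psi$; the induction is on the number of body iterations remaining in the first universal loop (well-founded since $c_1 \geq 1$). If the guard $b_1$ is false, then $\mathbb{I} \Rightarrow \bigwedge_{i \geq 2}(b_1 \leftrightarrow b_i)$ forces every guard false, so all loops terminate simultaneously in a state satisfying $\mathbb{I} \land \bigwedge_i \neg b_i$, and the suffix premise (valid by the induction hypothesis) discharges the suffixes $\progg_i$ and establishes $\Psi$. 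Otherwise all guards are true and I chain the $B$ sub-step premises from $\mathbb{I}_1 = \mathbb{I}$ to $\mathbb{I}_{B+1} = \mathbb{I}$: in sub-step $j$ the guard conjunct $\bigwedge_{c_i > j} b_i$ carried in the postcondition matches the precondition $\bigwedge_{c_i \geq j+1} b_i$ of sub-step $j+1$, which both keeps each universal loop from terminating before its $c_i$-th iteration of the round and supplies the guards letting the existential loops take their matching iterations.

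Two points require care, and they are the crux of the whole proof. First, because the universal parts of each FEHT premise are over-approximate, the guard conjuncts in the sub-step postconditions hold for \emph{every} actual execution of the universal bodies; this is what guarantees that a universal loop can only stop at a round boundary, so its iteration count is forced to be a multiple of $c_i$ and all loops halt in the same round. Second, the existential loop executions must be assembled coherently across rounds and sub-steps: each sub-step premise yields, for the already-fixed universal bodies of that sub-step, an existential body execution, and I would concatenate these in the correct order into full runs of the existential $\mywhile$ loops, using the guard conjuncts to certify that each concatenated body is actually taken and the final $\neg b_i$ (from guard-equivalence at termination) to certify that each existential loop stops exactly when its universal counterparts do. Having established the loop case, the soundness theorem follows by the outer structural induction.
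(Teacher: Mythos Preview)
Your proposal is correct and follows essentially the same approach as the paper: structural induction on the FEHL derivation with a case analysis on the last rule applied, unfolding the operational semantics and invoking the assumed soundness of the underlying HT and UHT systems for \nameref{rule:forall-step} and \nameref{rule:exists-step}. You in fact go further than the paper's own proof, which silently omits \nameref{rule:loop-count} among the ``obvious or analogous cases''; your round-based inner induction on the remaining iterations of the first universal loop, together with the two care points you flag (that the guard conjuncts in the sub-step postconditions force each universal loop to survive to its round boundary, and that the per-sub-step existential witnesses must be threaded into coherent full loop executions), is exactly the argument this rule requires.
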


\begin{restatable}[Completeness]{theorem}{completeness}\label{theo:completeness}
	Assume that $\vdash \tripH{\cdot}{\cdot}{\cdot}$ and $\vdash \tripN{\cdot}{\cdot}{\cdot}$ are complete proof systems for HTs and UHTs, respectively. 
	If $\rel{\Phi}{\univC}{\existsC}{\Psi}$ is valid then $\vdash \rel{\Phi}{\univC}{\existsC}{\Psi}$.
\end{restatable}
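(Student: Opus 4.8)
The plan is to prove relative completeness by induction on the total number $k+l$ of programs, peeling off one \emph{entire} program at each step and discharging it through the assumed-complete unary systems. The key structural observation is that neither loops nor the asynchronous rule \nameref{rule:loop-count} are needed for the completeness argument: whole loops are summarized by the complete HT/UHT oracles. I would process all universally quantified programs first and the existentially quantified ones afterwards, using \nameref{rule:forall-step}, \nameref{rule:exists-step}, \nameref{rule:done}, the skip-insertion/elimination rules (\nameref{rule:forall-intro}, \nameref{rule:forall-elim}, and their existential analogues), and the rule of consequence from the full system. Concretely, if $k \geq 1$ I take the universal step; otherwise if $l \geq 1$ I take the existential step; otherwise I am in the base case.

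\textbf{Universal step.} Let $\Phi' := \mathit{sp}(\prog_1, \Phi)$ be the strongest postcondition of $\prog_1$ relative to $\Phi$, taken over the full (retained) variable space $\bigcup_{i=1}^{k+l} \progVars_i$; it is expressible by the background-theory assumption. Since $\prog_1$ touches only $\progVars_1$, the triple $\tripH{\Phi}{\prog_1}{\Phi'}$ is valid and hence derivable by HT-completeness. I then claim $\rel{\Phi'}{\prog_2 \oast \cdots \oast \prog_k}{\existsC}{\Psi}$ is again valid: any $\Phi'$-state, together with terminating runs of $\prog_2, \dots, \prog_k$, traces back through the definition of $\mathit{sp}$ to a $\Phi$-state and a run of $\prog_1$, so the existential witnesses guaranteed by validity of the original FEHT transfer verbatim (the final combined state is identical). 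The induction hypothesis yields a derivation of the reduced FEHT; after inserting a trailing $\myskip$ into $\prog_1$, one application of \nameref{rule:forall-step} followed by \nameref{rule:forall-elim} recombines everything.

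\textbf{Existential step.} Here $k = 0$ and $l \geq 1$, so the obligation is a valid $\rel{\Phi}{\epsilon}{\prog_{k+1} \oast \cdots \oast \prog_{k+l}}{\Psi}$. The crux is the choice of intermediate assertion for the UHT. I take $\Phi'$ to be the set of combined states $\tau$ that are both (i) reachable from some $\sigma \models \Phi$ by executing the first existential program $\prog_{k+1}$ (which modifies only its own variable block), and (ii) \emph{forward-completable}, i.e. there exist runs of $\prog_{k+2}, \dots, \prog_{k+l}$ from $\tau$ whose combined final state satisfies $\Psi$. Validity supplies, for each $\sigma \models \Phi$, a family of existential runs reaching $\Psi$; the state reached just after $\prog_{k+1}$ satisfies both (i) and (ii), so $\tripN{\Phi}{\prog_{k+1}}{\Phi'}$ is valid and therefore derivable by UHT-completeness. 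By clause (ii), every state of $\Phi'$ is completable, so $\rel{\Phi'}{\epsilon}{\prog_{k+2} \oast \cdots \oast \prog_{k+l}}{\Psi}$ is valid; the induction hypothesis applies, and \nameref{rule:exists-step} (again after a trailing $\myskip$) recombines the two derivations.

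\textbf{Base case and main obstacle.} When $k = l = 0$, validity of $\rel{\Phi}{\epsilon}{\epsilon}{\Psi}$ over the retained variables collapses to the implication $\Phi \Rightarrow \Psi$, discharged by \nameref{rule:done} and consequence. The delicate point — and the step I expect to be the main obstacle — is precisely the existential intermediate assertion. A naive choice, such as the full set of $\prog_{k+1}$-reachable states, validates the UHT but may destroy validity of the residual FEHT, since some reachable states are dead ends from which $\prog_{k+2}, \dots, \prog_{k+l}$ cannot reach $\Psi$. Intersecting reachability with forward-completability is exactly what reconciles the forward reading of the UHT with the single global existential witness promised by FEHT validity. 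Loop alignment, by contrast, never surfaces as an obstacle here: relative completeness delegates entire loops to the complete HT/UHT systems, so \nameref{rule:loop-count} is required only for automation, not for the completeness theorem.
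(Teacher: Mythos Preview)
Your proposal is correct and follows essentially the same route as the paper: process all universally quantified programs first via strongest postconditions and HT-completeness, then the existential ones via UHT-completeness, and finish with \nameref{rule:done} plus consequence, using only \nameref{rule:forall-step}, \nameref{rule:exists-step}, and the skip-manipulation rules. The paper packages this as a derived \texttt{(Self-Composition)} meta-rule rather than an explicit induction on $k+l$, and defines the existential intermediate assertions via a chosen witness execution per state rather than your ``reachable and forward-completable'' set, but these are cosmetic differences---your intermediate assertion is in fact slightly cleaner since it avoids an implicit appeal to choice.
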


Completeness follows easily by making extensive use of \emph{unary} reasoning via (U)HTs, similar to the completeness-proof of relational Hoare logic for $k$-safety properties \cite{NagasamudramN21}.
In fact, \nameref{rule:forall-step}, \nameref{rule:exists-step}, \nameref{rule:done} along with the reordering rules \nameref{rule:forall-comm}, \nameref{rule:forall-intro}, and \nameref{rule:forall-elim} (and their analogous counterparts for existentially quantified programs) already suffice for completeness (see \ifFull{\Cref{app:fehl_compltness}}{\cite{full}}). 
In the following, we leverage the \emph{soundness} of FEHL's rules to guide our automated verification.

\section{Automated Verification of Hyperliveness}\label{sec:algorithm}

Our automated verification algorithm for FEHTs follows a \emph{strongest postcondition} computation, as is widely used in the verification of non-relational properties \cite{AhrendtBBBGHMMRSS05,HenzingerJMM04,PasareanuV04} and $k$-safety properties \cite{SousaD16,ChenFD17}.
However, due to the inherent presence of \emph{existential} quantification in FEHT, the strongest postcondition does, in general, not exist.
For example, both $\rel{\top}{\epsilon}{x \myassign \star}{x = 1}$ and $\rel{\top}{\epsilon}{x \myassign \star}{x = 2}$ are valid but $\rel{\top}{\epsilon}{x \myassign \star}{x = 1 \land x = 2 \equiv \bot}$ is clearly not.
Instead, our algorithm uses the proof rules of FEHL and treats the concrete value for nondeterministic choices in existentially quantified executions symbolically.
I.e., we view the outcome as a fresh variable (called a \emph{parameter}) that can be instantiated later. 
This idea of instating nondeterminism at a later point in time has already found successful application in many areas, such as existential variables in Coq or symbolic execution \cite{King76}.
Our analysis brings these techniques to the realm of  hyperproperty verification, which we show to yield an effective automated verification algorithm. 
In the following, we formally introduce parametric assertions and postconditions (in \Cref{sec:para_assertions}) and show how we can compute them using the rules of FEHL (in \Cref{sec:algorithmDetails,sec:algorithmDetailsLoop}).

\subsection{Parametric Assertions and Postconditions}\label{sec:para_assertions}

We assume that $\parameter = \{\mu_1, \ldots, \mu_n\}$ is a set of \emph{parameters}. 
In FEHTs, we use assertions (formulas) over $\bigcup_{i=1}^{k+l} \progVars_i$, which we interpret as sets of (relational) states.  
A parametric assertion generalizes this by viewing an assertion as a function mapping \emph{into} sets of (relational) states.
Formally, a \emph{parametric assertion} is a pair $(\postF, \condF)$ where $\postF$ is a formula over $\bigcup_{i=1}^{k+l} \progVars_i \cup \parameter$ (called the \emph{function-formula}), and $\condF$ is a formula over $\parameter$ (called the \emph{restriction-formula}).

Given a function-formula $\postF$ (over $\bigcup_{i=1}^{k+l} \progVars_i \cup \parameter$) and a \emph{parameter evaluation} $\kappa : \parameter \to \intSet$, we define $\postF[\kappa]$ as the formula over $\bigcup_{i=1}^{k+l} \progVars_i$ where we fix concrete values for all parameters based on $\kappa$.
We can thus view $\postF$ as a function mapping each parameter evaluation $\kappa$ to the set of states encoded by $\postF[\kappa]$.
During our (forward style) analysis, we will use parameters to postpone nondeterministic choices in existentially quantified programs.
Intuitively, for every parameter evaluation $\kappa$ (i.e., any retrospective choice of the nondeterministic outcome), $\postF[\kappa]$ should describe the reachable states (i.e., strongest postcondition) under those specific outcomes.
However, not all concrete values for the parameters are valid in the sense that they correspond to nondeterministic outcomes that result in actual executions. 
To mitigate this, a parametric assertion $(\postF, \condF)$ includes a restriction-formula $\condF$ (over $\parameter$) which \emph{restrict the domain} of the function encoded by $\postF$, i.e., we only consider those parameter evaluations that satisfy $\condF$.

\begin{example}\label{ex:para-assertion}
	Before proceeding with a formal development, let us discuss parametric assertions informally using an example.  
	Let $\prog_1 := x \myassign \star \mysemi\,\myassume(x \geq 9)$ and $\prog_2 := y \myassign\star\mysemi\,\myassume(y \geq 2)$ and assume we want to prove the FEHT $\rel{\top}{\prog_1}{\prog_2}{x = y}$.
	To verify this tuple in a principled way, we are interested in potential postconditions $\Psi$, i.e., assertions $\Psi$ such that $\rel{\top}{\prog_1}{\prog_2}{\Psi}$ is valid. 
	For example, both $\Psi_1 = x \geq 9 \land y = 2$ and $\Psi_2 = x \geq 9 \land y = 3$ are valid postconditions, but -- as already seen before -- there does not exist a strongest assertion.
	Instead, we capture \emph{multiple} postconditions using the parametric assertion $(\postF, \condF)$ where $\postF := x \geq 9 \land y = \mu$ and $\condF := \mu \geq 2$ for some fresh parameter $\mu \in \parameter$; we say $(\postF, \condF)$ is a \emph{parametric postcondition} for $(\top, \prog_1, \prog_2)$ (cf.~\Cref{def:paraPost}).
	Intuitively, we have used the parameter $\mu$ instead of assigning some fixed integer to $y$.
	For every concrete parameter evaluation $\kappa : \{\mu\} \to \intSet$ such that $\kappa \models \condF$, formula $\postF[\kappa]$ defines the reachable states when using $\kappa(\mu)$ for the choice of $y$.
	Observe how formula $\condF = \mu \geq 2$ restricts the possible set of parameter values, i.e., we may only choose a value for $y$ such that $\myassume(y \geq 2)$ holds.
	\demo
\end{example}

\begin{definition}\label{def:paraPost}
	A \emph{parametric postcondition for $(\Phi, \prog_1, \ldots, \prog_{k+l})$} is a parametric assertion $(\postF, \condF)$ with the following conditions.
	For all states $\sigma_1, \ldots, \sigma_{k+l}$, and $\sigma'_1, \ldots, \sigma'_{k}$ such that $\bigoplus_{i=1}^{k+l} \sigma_i \models \Phi$ and $\sem{\prog_i}(\sigma_i, \sigma'_i)$ for all $i \in [1,k]$ and any parameter evaluation $\kappa$ such that $\kappa \models \condF$ the following holds:
	\textbf{(1)} There exist states $\sigma'_{k+1}, \ldots, \sigma'_{k+l}$ such that $\bigoplus_{i=1}^{k+l} \sigma'_i \models \postF[\kappa]$, and 
	\textbf{(2)} For every $\sigma'_{k+1}, \ldots, \sigma'_{k+l}$ such that $\bigoplus_{i=1}^{k+l} \sigma'_i  \models \postF[\kappa]$ we have $\sem{\prog_i}(\sigma_i, \sigma'_i)$ for all $i \in [k+1, k+l]$.
\end{definition}

Condition \textbf{(1)} captures that no parameter evaluation may restrict universally quantified executions, i.e., if we fix any parameter evaluation $\kappa$ and reachable final states for the universally quantified programs, $\postF[\kappa]$ remains satisfiable.
This effectively states that $\postF[\kappa]$ \emph{over-approximates} the set of executions of universally quantified programs.
Condition \textbf{(2)} requires that all executions of existentially quantified programs allowed under a particular parameter evaluation are also valid executions, i.e., for any fixed parameter evaluation $\kappa$, $\postF[\kappa]$ \emph{under-approximates} the set of executions of the existentially quantified programs.

We can use parametric postconditions to prove FEHTs:

\begin{restatable}{theorem}{paraPostSound}\label{prop:postImplication}
	Let $(\postF, \condF)$ be a parametric postcondition for $(\Phi, \prog_1, \ldots, \prog_{k+l})$. If 
	\begin{align*}
		\textstyle
		\bigforall_{x \in \progVars_1 \cup \cdots \cup \progVars_k} x\ldot \bigexists_{\mu \in \parameter} \mu\ldot \;\; \condF \;\; \land  \;\; \bigforall_{x \in \progVars_{k+1} \cup \cdots \cup \progVars_{k+l}} x \ldot (\postF \Rightarrow \Psi)
	\end{align*}
	holds, then the FEHT $\rel{\Phi}{\prog_1 \oast \cdots \oast \prog_k}{\prog_{k+1} \oast \cdots \oast \prog_{k+l}}{\Psi}$ is valid.
\end{restatable}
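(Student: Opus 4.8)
The plan is to unfold the definition of FEHT validity and show that the hypothesis formula supplies exactly the data needed, using the two conditions of \Cref{def:paraPost} to convert a satisfying parameter evaluation into concrete witness states for the existentially quantified executions. The whole argument is a direct chase through the quantifiers, so I would organize it around the order in which witnesses become available.

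Concretely, I would first fix arbitrary states $\sigma_1, \ldots, \sigma_{k+l}$ and $\sigma_1', \ldots, \sigma_k'$ with $\bigoplus_{i=1}^{k+l} \sigma_i \models \Phi$ and $\sem{\prog_i}(\sigma_i, \sigma_i')$ for all $i \in [1,k]$; the goal is to produce $\sigma_{k+1}', \ldots, \sigma_{k+l}'$ that are reachable and together satisfy $\Psi$. The key observation is that the quantifier $\bigforall_{x \in \progVars_1 \cup \cdots \cup \progVars_k} x$ in the hypothesis ranges over exactly the variables whose values are pinned down by the \emph{final} states $\sigma_1', \ldots, \sigma_k'$ of the universally quantified programs. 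I would therefore instantiate that quantifier with the values read off from $\sigma_1', \ldots, \sigma_k'$. Since the hypothesis formula holds, this yields a parameter evaluation $\kappa$ (the witness for $\bigexists_{\mu \in \parameter} \mu$) such that \textbf{(i)} $\kappa \models \condF$, and \textbf{(ii)} under these fixed universal values and under $\kappa$, the implication $\postF \Rightarrow \Psi$ holds for every valuation of the existential-program variables in $\progVars_{k+1} \cup \cdots \cup \progVars_{k+l}$.

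With $\kappa \models \condF$ in hand, I would invoke \Cref{def:paraPost}. Condition \textbf{(1)}, applied to $\sigma_1, \ldots, \sigma_{k+l}$, the fixed $\sigma_1', \ldots, \sigma_k'$, and $\kappa$, produces states $\sigma_{k+1}', \ldots, \sigma_{k+l}'$ with $\bigoplus_{i=1}^{k+l} \sigma_i' \models \postF[\kappa]$, which I fix as my witnesses. Condition \textbf{(2)}, applied to the same data, then certifies that these witnesses are genuinely reachable, i.e., $\sem{\prog_i}(\sigma_i, \sigma_i')$ for all $i \in [k+1, k+l]$, discharging the termination half of validity. Finally, since $\bigoplus_{i=1}^{k+l} \sigma_i' \models \postF[\kappa]$, plugging the values of $\sigma_{k+1}', \ldots, \sigma_{k+l}'$ into the universally quantified implication from \textbf{(ii)} yields $\bigoplus_{i=1}^{k+l} \sigma_i' \models \Psi$, exactly the required postcondition.

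There is no single hard lemma here; the main point requiring care is the alignment of quantifier dependencies. The nesting $\bigforall \cdots \bigexists$ in the hypothesis lets $\kappa$ depend on the final states $\sigma_1', \ldots, \sigma_k'$ of the universal programs, which is precisely what the $\forall$-then-$\exists$ structure of FEHT validity demands: the resolution of nondeterminism in the existential executions may legitimately depend on the observed behavior of the universal executions. I would make sure this dependency is threaded in the correct direction and that each bound variable of the hypothesis is matched to the right component — final states of the universal programs, and free valuations of the existential-program variables — rather than to initial states.
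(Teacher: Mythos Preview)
Your proposal is correct and follows essentially the same argument as the paper: fix the universal data, instantiate the outer $\forall$ with the final states $\sigma_1',\ldots,\sigma_k'$, extract $\kappa$, use condition \textbf{(1)} of \Cref{def:paraPost} to obtain the witness states, condition \textbf{(2)} for reachability, and the inner implication for $\Psi$. Your explicit discussion of why the quantifier nesting matches the dependency structure of FEHT validity is a helpful addition but not a departure from the paper's route.
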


Here, we universally quantify over final states in $\prog_1, \ldots, \prog_k$ and existentially quantify over parameter evaluations that satisfy $\condF$ (recall that $\condF$ only refers to $\parameter$). 
The choice of the parameters can thus depend on the final states of universally quantified programs (as in the semantics of FEHTs). 
Afterward, we quantify (again universally) over final states of $\prog_{k+1}, \ldots, \prog_{k+l}$ and state that if $\postF$ holds, so does the postcondition $\Psi$.

\begin{example}\label{ex:para_post}
	Consider the FEHT and parametric postcondition from \Cref{ex:para-assertion}.
	Following \Cref{prop:postImplication}, we construct the SMT formula $\forall x\ldot \exists \mu\ldot  \mu \geq 2 \land \forall y\ldot  \big((x \geq 9 \land y = \mu) \Rightarrow x = y\big)$.
	This formula holds; the FEHT is valid. \demo
\end{example}

Note that $(\postF, \bot)$ is always a parametric postcondition: no parameter evaluation satisfies $\bot$, so the conditions in \Cref{def:paraPost} are vacuously satisfied. 
However, $(\postF, \bot)$ is useless when it comes to proving FEHTs via \Cref{prop:postImplication}.

\subsection{Generating Parametric Postconditions}\label{sec:algorithmDetails}

\begin{algorithm}[!t]
	\caption{Parametric postcondition generation for FEHT verification 
	}\label{algo:verify}

	\vspace{-2mm}
	\begin{minipage}[t]{0.57\textwidth}
\begin{code}
def genpp($\Phi$,$\univC$,$\existsC$):
@if $\univC= \existsC = \epsilon$: 
@@return ($\Phi$,$\top$) //(*\color{comment-color}\nameref{rule:done}*) (*\label{line:done}*)
@else if $\forall \prog \in \univC \cup \existsC \ldot\prog = \mywhile(\_, \_)\mysemi\_$:
@@return genppLoops($\Phi$, $\univC$, $\existsC$) (*\label{line:loops}*)
@else if $\exists \prog \in \univC\ldot \prog \neq \mywhile(\_, \_)\mysemi\_$: 
@@// Take a step in (*\color{comment-color}$\univC$*) (*\label{line:start-univ}*)
@@match $\univC$: 
@@| $\myskip \oast \univC'$: //(*\color{comment-color} \nameref{rule:forall-elim}*) (*\label{line:univ-skip1}*) 
@@@@return genpp($\Phi$,$\univC'$,$\existsC$)
@@| $\myskip\mysemi \prog \oast \univC'$: //(*\color{comment-color}\nameref{rule:forall-step}*)(*\label{line:univ-skip2}*) 
@@@@return genpp($\Phi$,$\prog \oast \univC'$,$\existsC$)
@@| $(\prog_1 \mysemi \prog_2) \mysemi \prog_3 \oast \univC'$: (*\label{line:univ-assoc}*)
@@@@return genpp($\Phi$,$\prog_1 \mysemi (\prog_2 \mysemi \prog_3)\oast \univC'$,$\existsC$)
@@| $\prog \oast \univC'$ when $\prog \neq \_ \mysemi \_$: //(*\color{comment-color} \nameref{rule:forall-intro}*) (*\label{line:univ-skip-intro}*)
@@@@return genpp($\Phi$,$\prog\mysemi\myskip \oast \univC'$,$\existsC$)
@@| $x \myassign e \mysemi \prog \oast \univC'$: //(*\color{comment-color}\nameref{rule:forall-step}*)(*\label{line:univ-assign}*)
@@@@$\Phi'$ := $\exists x'. \Phi[x'/x] \land x = e[x'/x]$ 
@@@@return genpp ($\Phi'$,$\prog \oast \univC'$,$\existsC$)
@@| $\myif(b, \prog_1, \prog_2)\mysemi \prog_3 \oast \univC'$: (*\label{line:univ-if}*)
@@@@//(*\color{comment-color}\nameref{rule:forall-if}*)
@@@@($\postF_1$,$\condF_1$) := 
@@@@@@genpp($\Phi \land b$,$\prog_1\mysemi \prog_3 \oast \univC'$,$\existsC$) 
@@@@($\postF_2$,$\condF_2$) := 
@@@@@@genpp($\Phi \land \neg b$,$\prog_2\mysemi \prog_3 \oast \univC'$,$\existsC$)  
@@@@return ($\postF_1 \lor \postF_2$,$\condF_1 \land \condF_2$)
@@| $\myassume(b)\mysemi \prog \oast \univC'$: //(*\color{comment-color}\nameref{rule:forall-assume}*)(*\label{line:univ-assume}*)
@@@@return genpp($\Phi \land b$,$\prog \oast \univC'$,$\existsC$)
\end{code}
	\end{minipage}%
	\begin{minipage}[t]{0.43\textwidth}
\begin{code}[firstnumber=29]
@| $x \myassign \star\mysemi \prog \oast \univC'$: //(*\color{comment-color}\nameref{rule:forall-inf-nd}*)(*\label{line:univ-nd}*)
@@@$\Phi'$ := $\exists x\ldot \Phi$
@@@return genpp($\Phi'$,$\prog \oast \univC'$,$\existsC$)
@| $\prog \oast \univC'$: //(*\color{comment-color} \nameref{rule:forall-comm}*)
@@@return genpp($\Phi$,$\univC' \oast \prog$,$\existsC$) (*\label{line:univ-reorder}*)  (*\label{line:end-univ}*)
else:
@// Take a step in (*\color{comment-color}$\existsC$*) (*\label{line:start-exists}*)
@match $\existsC$:
@| $\myskip \oast \existsC'$ | $\myskip\mysemi \prog \oast \existsC'$(*\label{line:same-cases-start}*)
@| $(\prog_1 \mysemi \prog_2) \mysemi \prog_3 \oast \existsC'$ 
@| $\prog \oast \existsC'$ when $\prog \neq \_ \mysemi\_$
@| $x \myassign e \mysemi \prog \oast \existsC'$ 
@| $\myif(b, \prog_1, \prog_2)\mysemi \prog_3 \oast \existsC'$:
@@@//As in lines (*\ref{line:univ-skip1}*), (*\ref{line:univ-skip2}*), (*\ref{line:univ-assign}*)
@@@//(*\ref{line:univ-if}*), (*\ref{line:univ-assoc}*), and (*\ref{line:univ-skip-intro}*)(*\label{line:same-cases-done}*)
@| $\myassume(b)\mysemi \prog \oast \existsC'$:
@@@//(*\color{comment-color}\nameref{rule:exists-assume}*)(*\label{line:exists-assume}*)
@@@$\condF_\mathit{assume}$ := 
@@@@@$\bigforall_{x \in \progVars_1 \cup \cdots \cup \progVars_{k+l} } x \ldot  (\Phi \Rightarrow b)$
@@@($\postF$,$\condF$) := 
@@@@@genpp($\Phi \land b$,$\univC$,$\prog \oast \existsC'$)
@@@return ($\postF$,$\condF \land \condF_\mathit{assume}$)
@| $x \myassign \star\mysemi \prog \oast \existsC'$: //(*\color{comment-color}\nameref{rule:exists-inf-nd} *)(*\label{line:exists-nd}*)
@@@$\mu$ := freshParameter()
@@@$\Phi'$ := $(\exists x. \Phi) \land x = \mu$
@@@return genpp($\Phi'$,$\univC$,$\prog \oast  \existsC'$)
@| $\prog \oast \existsC'$: 
@@@return genpp($\Phi$,$\univC$,$\existsC' \oast \prog$)   (*\label{line:end-exists}*) (*\label{line:exists-reorder}*)
\end{code}
	\end{minipage}
	\vspace{-2mm}
\end{algorithm}

\Cref{algo:verify} computes a parametric postcondition based on the proof rules of FEHL from \Cref{sec:FEHL}.
As input, \Cref{algo:verify} expects a formula $\Phi$ over $\bigcup_{i=1}^{k+l}\progVars_i \cup \parameter$ -- think of $\Phi$ as a precondition already containing some parameters -- and two program lists $\univC$ and $\existsC$.
It outputs a parametric postcondition.

\begin{remark}\label{fnlabel}
	For intuition, it is oftentimes helpful to consider $\Phi$ as a parameter-free formula over $\bigcup_{i=1}^{k+l}\progVars_i$. 
	In this case, most of our steps correspond to the computation of the strongest postcondition \cite{0067387,SousaD16,ChenFD17} in a purely universal ($k$-safety) setting. \demo
\end{remark}

Our algorithm analyses the structure of each program and applies the insights from FEHL:
If $\univC$ and $\existsC$ are empty, we return $(\Phi, \top)$ (line \ref{line:done}), i.e., we do not place any restrictions on the parameters.
In case all programs are loops (line \ref{line:loops}), we invoke a subroutine \lstinline[style=code-style-large, language=code-lang]|genppLoops| (discussed in \Cref{sec:algorithmDetailsLoop}).
Otherwise, some program has a non-loop statement at the top level, allowing further symbolic analysis. 
We consider possible steps in $\univC$ (lines \ref{line:start-univ}-\ref{line:end-univ}) and in $\existsC$  (lines \ref{line:start-exists}-\ref{line:end-exists}).

We first consider the case where a universally quantified program has a non-loop statement at its top level (lines \ref{line:start-univ}-\ref{line:end-univ}).
In lines \ref{line:univ-skip1}, \ref{line:univ-skip2}, \ref{line:univ-assoc},  and  \ref{line:univ-skip-intro}, we bring the first program into the form $\prog_1 \mysemi \prog_2$ where $\prog_1 \neq \_ \mysemi \_$ by potentially inserting $\myskip$ statements in line \ref{line:univ-skip-intro}.
For a program $x \myassign e\mysemi \prog$ (line \ref{line:univ-assign}), we use \nameref{rule:forall-step} to handle the assignment. 
Here, we can compute the strongest postcondition of the assignment as $\exists x'. \Phi[x'/x] \land x = e[x'/x]$ (using Floyd's forward running rule \cite{floyd1993assigning}).
For conditionals (line \ref{line:univ-if}), we analyze both branches under the strengthened precondition.
As our analysis operates on parametric assertions, some of the parameters found in the precondition $\Phi$ can be restricted in \emph{both branches}.  
After we have computed a parametric postcondition for each branch, we therefore combine them into a parametric postcondition for the entire program by constructing the disjunction of the function-formulas $\postF_1$ and $\postF_2$ (describing the set of states reachable in either of the branches), and conjoining the restriction-formulas $\condF_1$ and $\condF_2$. 
For assume statements (line \ref{line:univ-assume}), we strengthen the precondition.
For nondeterministic assignments $x \myassign \star$ (line \ref{line:univ-nd}), we invalidate all knowledge about $x$.
If a program matches none of the previous cases (line \ref{line:univ-reorder}), it must be of the form $\mywhile(\_, \_)\mysemi \_$, and we move it to the end of $\univC$, continuing the analysis of the renaming programs in the next recursive iteration. 
If no universally quantified program can be analyzed further, we continue the investigation with existentially quantified ones (lines \ref{line:start-exists}-\ref{line:end-exists}).
Many cases are analogous to the treatment in universally quantified programs (lines \ref{line:same-cases-start}-\ref{line:same-cases-done}), but some cases are handled fundamentally differently:
If we encounter an assume statement $\myassume(b)$ (line \ref{line:exists-assume}), we need to certify that $b$ holds in all states in $\Phi$ (cf.~\nameref{rule:exists-assume}).
As we already hinted in \Cref{ex:para-assertion}, we accomplish this by restricting the viable set of parameters in $\Phi$, i.e., we restrict the domain of the function formula $\Phi$.
Concretely, we consider the formula $\condF_\mathit{assume} := \bigforall_{x \in \progVars_1 \cup \cdots \cup \progVars_{k+l} } x \ldot  (\Phi \Rightarrow b)$ (which is a formula over $\parameter$) that characterizes exactly those parameters that ensure that all states in $\Phi$ satisfy $b$.
After analyzing the remaining programs, we then conjoin $\condF_\mathit{assume}$ with the remaining restrictions.

\begin{remark}
	As in \Cref{fnlabel}, we can consider the case where $\Phi$ contains no parameter.
	In this case, $\condF_\mathit{assume}$ is a variable-free formula that is equivalent to $\top$ iff all states in $\Phi$ satisfy $b$.
	If $\Phi$ does \emph{not} imply $b$ (so $\condF_\mathit{assume}\equiv \bot$), the resulting parametric postcondition thus cannot prove any FEHT via \Cref{prop:postImplication}.\demo
\end{remark}

For nondeterministic assignments $x \myassign \star$ (line \ref{line:exists-nd}), we create a fresh parameter $\mu$ and continue the analysis under the precondition that $x = \mu$, effectively postponing the choice of a concrete value for $x$ (cf.~\Cref{ex:para-assertion}).

\begin{example}\label{ex:full}
	Our algorithm will automatically compute the parametric postcondition from \Cref{ex:para-assertion}.
	In particular, for the $\myassume(y \geq 2)$ statement, we match line \ref{line:exists-assume} with $\Phi = x \geq 9 \land y = \mu$ for $\mu \in \parameter$ and compute $\condF_\mathit{assume} := \forall x, y\ldot \Phi \Rightarrow y \geq 2$, which is logically equivalent to $\mu \geq 2$. \demo
\end{example}

\begin{algorithm}[!t]
	\caption{Parametric postcondition generation for loops
	}\label{algo:loop}
\vspace{-2mm}
\begin{code}
def genppLoops($\Phi$,$\bigoast_{i=1}^k \big(\mywhile(b_i, \prog_i)\mysemi \progg_i\big)$,$\bigoast_{i=k+1}^{k+l} \big(\mywhile(b_i, \prog_i)\mysemi \progg_i\big)$):
@@@$\mathbb{I}, c_1, \ldots, c_{k+l}$ := guessInvariantAndCounts()
@@@$B$ := max($c_1$,$\ldots$,$c_{k+l}$)
@@@$\condF_\mathit{init}$ := $\bigforall_{x \in \progVars_1 \cup \cdots \cup \progVars_{k+l} } x \ldot (\Phi \Rightarrow \mathbb{I})$(*\label{line:loop-cond1}*)
@@@$\condF_\mathit{sim}$ := $\bigforall_{x \in \progVars_1 \cup \cdots \cup\progVars_{k+l} } x \ldot (\mathbb{I} \Rightarrow \bigwedge_{i = 2}^{k+l} b_1\leftrightarrow b_i)$(*\label{line:loop-cond2}*)
@@@$\postF_1$ := $\mathbb{I}$
@@@for $j$ from $1$ to $B$: 
@@@@@@($\postF_{j+1}, \condF_{j+1}$) := genpp($\postF_j \land   \bigwedge_{i = 1 \mid c_i \geq j}^{k+l} b_i$,$\bigoast_{\substack{i=1 \mid c_i \geq j}}^{k}    \prog_i$,$\bigoast_{\substack{i={k+1} \mid c_i \geq j}}^{k+l}  \prog_i$) (*\label{line:loop-bodies}*)
@@@@@@$\condF_{j+1}^\mathit{cont}$ := $\bigforall_{x \in \progVars_1 \cup \cdots \cup\progVars_{k+l} } x \ldot (\postF_{j+1} \Rightarrow \bigwedge_{i=1 \mid c_i > j}^{k+l} b_i)$ (*\label{line:loop-no-term}*)
@@@$\condF_\mathit{ind}$ := $\bigforall_{x \in \progVars_1 \cup \cdots \cup\progVars_{k+l} } x \ldot (\postF_{B+1} \Rightarrow \mathbb{I})$(*\label{line:loop-ind}*)
@@@$(\postF_\mathit{rem}, \condF_\mathit{rem})$ := genpp($\mathbb{I} \land \bigwedge_{i=1}^{k+l} \neg b_i$,$\bigoast_{\substack{i=1 }}^{k}  \progg_i$, $\bigoast_{\substack{i=k+1}}^{k+l}  \progg_i$)
@@@return ($\postF_\mathit{rem}$,$\condF_\mathit{init} \land \condF_\mathit{sim} \land \bigwedge_{j=2}^{B+1} \condF_j \land \bigwedge_{j=2}^{B+1} \condF_{j}^\mathit{cont} \land \condF_\mathit{ind} \land \condF_\mathit{rem}$)
\end{code}
\vspace{-2mm}
\end{algorithm}

\subsection{Generating Parametric Postconditions for Loops}\label{sec:algorithmDetailsLoop}

We sketch the postcondition generation for loops in \Cref{algo:loop}.
As input, \lstinline[style=code-style-large, language=code-lang]|genppLoops| expects a precondition $\Phi$ over $\bigcup_{i=1}^{k+l}\progVars_i \cup \parameter$  and universally and existentially quantified loop programs. 
In the first step, we guess a loop invariant $\mathbb{I}$ and counter values $c_1, \ldots, c_{k+l} \in [1, B]$ (cf.~\nameref{rule:loop-count}).
In lines \ref{line:loop-cond1} and \ref{line:loop-cond2}, we ensure that $\mathbb{I}$ is initial and guarantees simultaneous termination by computing restrictions $\condF_\mathit{init}$ and $\condF_\mathit{sim}$ on the parameters present in $\Phi$ (similar to \myassume{} statements in line \ref{line:exists-assume} of \Cref{algo:verify}).
Again, in the special case where $\Phi$ contains no parameter (as is, e.g., the case when applying our algorithm to $k$-safety properties), $\condF_\mathit{init}$ (resp.~$\condF_\mathit{sim}$) is equivalent to $\top$ iff the invariant is initial (resp.~guarantees simultaneous termination).
Afterward, we check the validity of the guessed counter values $c_1, \ldots, c_{k+l}$. 
For each $j$ from $1$ to $B$, we compute a parametric postcondition $(\postF_{j+1}, \condF_{j+1})$ for the bodies of all loops that should be executed at least $j$ times (i.e., $c_i \geq j$) starting from precondition $\postF_j$ via a (mutually recursive) call to \lstinline[style=code-style-large, language=code-lang]|genpp| (line \ref{line:loop-bodies}). 
To ensure valid derivation using \nameref{rule:loop-count} we need to ensure that -- in $\postF_{j+1}$ -- the guard of all loops that we want to execute \emph{more} than $j$ times still evaluates to true. 
We ensure this by computing the restriction-formula $\condF_{j+1}^\mathit{cont}$, which restricts the parameters (both those already present in the precondition $\Phi$ and those added during the analysis of the loop bodies) such that all states in $\postF_{j+1}$ fulfill the guards of all loops with $c_i > j$ (line \ref{line:loop-no-term}).
After we have symbolically executed all loops the desired number of times, we construct a parameter restriction $\condF_\mathit{ind}$ that ensures that we end within the invariant, i.e., $\postF_{B+1} \Rightarrow \mathbb{I}$ (line \ref{line:loop-ind}).
In the last step, we compute a parametric postcondition $(\postF_{\mathit{rem}}, \condF_{\mathit{rem}})$ for the program suffix executed after the loops. 
We return the parametric postcondition that consists of the function-formula $\postF_{\mathit{rem}}$ and the conjunction of all restriction-formulas.

\subsection{The Main Verification}

From the soundness of FEHL (\Cref{theo:sound}) we directly get:

\begin{proposition}\label{lem:correct-pp}
	\lstinline[style=code-style-large, language=code-lang]|genpp($\Phi$,$\univC$,$\existsC$)| computes some parametric postcondition for $(\Phi, \univC, \existsC)$.
\end{proposition}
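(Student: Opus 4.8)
The plan is to prove \Cref{lem:correct-pp} by induction on the structure of the recursive computation of \lstinline|genpp|, showing that each case returns a pair $(\postF,\condF)$ satisfying the two conditions of \Cref{def:paraPost}. The key insight is that \Cref{algo:verify} and \Cref{algo:loop} are precisely a syntax-directed realization of the proof rules of FEHL, so each returned pair should correspond to a derivable FEHT. Rather than argue the definitional conditions \textbf{(1)} and \textbf{(2)} from scratch at every step, I would connect each case of the algorithm to the soundness of a specific FEHL rule (\Cref{theo:sound}), which already packages the semantic reasoning we need.

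\medskip

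First I would set up the induction. The recursion in \lstinline|genpp| terminates because each recursive call either strictly decreases the combined syntactic size of $\univC \oast \existsC$ or brings a program into the normal form $\prog_1 \mysemi \prog_2$ targeted by the step rules (the reordering and $\myskip$-manipulation cases of lines \ref{line:univ-skip1}--\ref{line:univ-skip-intro} must be argued to not loop, e.g.\ via a suitable well-founded measure); I would make this measure explicit so the induction is well-founded. The base case is line \ref{line:done}: when $\univC = \existsC = \epsilon$, the pair $(\Phi,\top)$ is returned, and both conditions of \Cref{def:paraPost} hold because there are no existentially quantified programs to execute and $\postF[\kappa] = \Phi$ for every $\kappa \models \top$, matching \nameref{rule:done}.

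\medskip

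For the inductive cases I would proceed rule by rule, in each case invoking the induction hypothesis on the recursive call(s) and then verifying that the transformation applied to the precondition (or to the returned restriction-formula) preserves the parametric-postcondition conditions. The universal-program cases (lines \ref{line:start-univ}--\ref{line:end-univ}) mirror the standard strongest-postcondition argument of \Cref{fnlabel}: the assignment case uses Floyd's forward rule and \nameref{rule:forall-step}; the conditional case combines two branch postconditions by disjoining function-formulas and conjoining restriction-formulas, justified by \nameref{rule:forall-if}; and \nameref{rule:forall-inf-nd}, \nameref{rule:forall-assume} handle nondeterministic assignment and assume. The genuinely parametric cases lie on the existential side: for $\myassume(b)$ (line \ref{line:exists-assume}) I would show that conjoining $\condF_\mathit{assume} := \bigforall_{x} (\Phi \Rightarrow b)$ to the restriction exactly carves out the parameter evaluations under which all states in $\Phi$ satisfy $b$, matching \nameref{rule:exists-assume}; and for $x \myassign \star$ (line \ref{line:exists-nd}) I would show that introducing a fresh parameter $\mu$ with $x = \mu$ realizes \nameref{rule:exists-inf-nd} with the postponed choice $e := \mu$, so condition \textbf{(2)} (under-approximation of existential executions) holds for every $\kappa$.

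\medskip

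The main obstacle is the loop case, \lstinline|genppLoops| (\Cref{algo:loop}): here I would establish that the conjunction of restriction-formulas $\condF_\mathit{init} \land \condF_\mathit{sim} \land \bigwedge_j \condF_j \land \bigwedge_j \condF_j^\mathit{cont} \land \condF_\mathit{ind} \land \condF_\mathit{rem}$ exactly enforces the side conditions of \nameref{rule:loop-count}, so that whenever $\kappa$ satisfies this conjunction, the guessed invariant $\mathbb{I}$ and counts $c_1,\dots,c_{k+l}$ yield a valid instance of the rule. The delicate points are that $\condF_{j+1}^\mathit{cont}$ must guarantee the guard holds after the first $c_i-1$ body executions (so that each existential loop genuinely runs at least $c_i$ times, not merely that $c_i$ sequential copies terminate), that $\condF_\mathit{ind}$ closes the induction by re-establishing $\mathbb{I}$, and that the inner recursive calls on the loop bodies return parametric postconditions by the induction hypothesis. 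Once each restriction-conjunct is matched to its corresponding premise of \nameref{rule:loop-count}, soundness of that rule (part of \Cref{theo:sound}) discharges both conditions of \Cref{def:paraPost} for the loop case, completing the induction.
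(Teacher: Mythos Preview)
Your proposal is correct and follows the same underlying idea as the paper, namely that each case of \lstinline|genpp| mirrors a FEHL rule, so the result follows from \Cref{theo:sound}; the paper, however, does not spell out any of this and simply asserts the proposition as a direct consequence of soundness. Your inductive case analysis (including the careful treatment of the loop case and the restriction-formula conjuncts) is exactly the argument one would have to write out to make the paper's one-line claim rigorous.
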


Given an FEHT $\rel{\Phi}{\univC}{\existsC}{\Psi}$, we can thus invoke \lstinline[style=code-style-large, language=code-lang]|genpp($\Phi$,$\univC$,$\existsC$)| to compute a parametric postcondition, which (if strong enough) allows us to prove that $\rel{\Phi}{\univC}{\existsC}{\Psi}$ is valid via \Cref{prop:postImplication}.
If the postcondition is too weak, we can re-run \lstinline[style=code-style-large, language=code-lang]|genpp| using updated invariant guesses (cf.~\Cref{sec:implementation}).
For loop-free programs, it is easy to see that \lstinline[style=code-style-large, language=code-lang]|genpp| computes the ``strongest possible`` parametric postcondition (it effectively executes the programs symbolically without incurring the imprecision inserted by loop invariants).
In this case, the query from \Cref{prop:postImplication} holds if and only if the FEHT is valid; our algorithm thus constitutes a \emph{complete} verification method.

\paragraph{Invalid FEHTs.}

We stress that the goal of our algorithm is the verification of FEHTs and not proving that an FEHT is \emph{invalid}.
For $k$-safety properties, a refutation (counterexample) consists of a $k$-tuple of concrete executions that violate the property \cite{SousaD16,ChenFD17}.
In contrast, refuting an FEHT corresponds to \emph{proving} a $\exists^*\forall^*$ property, an orthogonal problem that requires independent proof ideas.

\section{Implementation and Experiments}\label{sec:implementation}

We have implemented our verification algorithm in a tool called \tool{} \cite{tool} (short for \textbf{For}all \textbf{Ex}ists Verification), supporting programs in a minimalistic \texttt{C}-like language that features basic control structures (cf.~\Cref{sec:prelim}), arrays, and bitvectors.
\tool{} uses \texttt{Z3} \cite{MouraB08} to discharge SMT queries and supports the theory of linear integer arithmetic, the theory of arrays, and the theory of finite bitvectors.
Compared to the presentation in \Cref{sec:algorithm}, we check satisfiability of restriction-formulas \emph{eagerly}:
For example, in \Cref{algo:loop}, we compute multiple restriction-formulas and return their conjunction. 
In \tool{}, we immediately check these intermediate restrictions for satisfiability; if any restriction is unsatisfiable on its own, any conjunction involving it will be as well, so we can abort the analysis early and re-start parts of the analysis using, e.g., updated invariants and counter values.

\subsection{Loop Invariant Generation}\label{sec:sub:inv}

Our loop invariant generation and counter value inference follows a standard guess-and-check procedure \cite{FlanaganL01,SharmaGHALN13,SousaD16,ChenFD17,SharmaA14}, i.e., we generate promising candidates by combining expressions found in the programs and equalities between variables in the loop guards.
In most loops, there exist ``anchor'' variables that effectively couple executions of multiple loops together \cite{SousaD16,ChenFD17}; even in asynchronous cases like \Cref{ex:counting-example}.
Exploring more advanced invariant generation techniques is interesting future work.
However -- even in the simpler setting of $k$-safety properties -- many tools currently rely on a guess-and-check approach \cite{SousaD16,ChenFD17}.
We maintain a lattice of possible candidates ordered by implication, which allows us for efficient pruning.
For example, if the current candidate is not initial (i.e., $\condF_\mathit{init}$ computed in line \ref{line:loop-cond1} of \Cref{algo:loop} is unsatisfiable), we do not need to consider stronger candidates. 
Likewise, if the candidate does not ensure simultaneous termination ($\condF_\mathit{sim}$) we can prune all weaker invariants.

\subsection{Experiments}\label{sec:sub:experiments}

We evaluate \tool{} in various settings where FEHT-like specifications arise.
We compare with \HyPA{} (a predicate-abstraction-based solver) \cite{BeutnerF22}, \PCSAT{} (a constraint-based solver that relies on predicate templates) \cite{UnnoTK21}, and \HyPro{} (a model-checker for $\forall^*\exists^*$ properties in \emph{finite-state} systems) \cite{BeutnerF22b}.
Our results were obtained on a M1 Pro CPU with 32GB of memory.

\begin{figure}[!t]
	
	\begin{minipage}[t]{0.5\linewidth}
		\begin{subtable}{\linewidth}
			
			\centering
			\scalebox{1.0}{\parbox{\linewidth}{
					\small
				\centering
				\def\arraystretch{1.1}
				\begin{tabular}{l@{\hspace{5mm}}c@{\hspace{5mm}}c}
					\toprule
					\textbf{Instance} & $\boldsymbol{t}_{\HyPA{}}$ & $\boldsymbol{t}_{\tool{}}$ \\
					\midrule
					\textsc{DoubleSquareNI}$^{\dagger}$  &  67.12 & \textbf{0.71} \\
					\textsc{Exp1x3}  &3.79 & \textbf{0.30} \\
					\textsc{Fig3} &  8.78 & \textbf{0.39} \\
					\textsc{DoubleSquareNIff}  &  4.91  & \textbf{0.37} \\
					\textsc{Fig2}$^{\dagger}$  &  17.7 & \textbf{0.73} \\
					\textsc{ColIitemSymm}  & 15.51 & \textbf{0.20} \\
					\textsc{CounterDet}  & 5.28 & \textbf{0.55}\\
					\textsc{MultEquiv} & 13.13 & \textbf{0.60} \\
					\textsc{HalfSquareNI} & \textbf{68.04}  & - \\
					\textsc{SquaresSum} & \textbf{17.03} & - \\
					\textsc{ArrayInsert} & \textbf{16.17} & - \\
					\bottomrule
				\end{tabular}
			}}
			\vspace{-0mm}
			\subcaption{}\label{tab:k-safety}
		\end{subtable}
	\end{minipage}%
	\begin{minipage}[t]{0.5\linewidth}
		\begin{subtable}{\linewidth}
				\centering
				\scalebox{1.0}{\parbox{\linewidth}{
						\centering
						\small
						\def\arraystretch{1.1}
						\begin{tabular}{l@{\hspace{5mm}}c@{\hspace{5mm}}c}
							\toprule
							\textbf{Instance} & $\boldsymbol{t}_{\HyPA{}}$ & $\boldsymbol{t}_{\tool{}}$  \\
							\midrule
							\textsc{NonDetAdd} &  3.63 & \textbf{0.76} \\
							\textsc{CounterSum} &  5.05 & \textbf{1.95} \\
							\textsc{AsynchGNI} &   5.20 & \textbf{0.69} \\
							\textsc{CompilerOpt1} &  1.79 & \textbf{0.59} \\
							\textsc{CompilerOpt2} &  2.71 & \textbf{1.02} \\
							\textsc{Refine} &  10.1 & \textbf{0.57} \\
							\textsc{Refine2} &  9.87 & \textbf{0.64} \\
							\textsc{Smaller} &  2.21 & \textbf{0.69} \\
							\textsc{CounterDiff}  &  8.05 & \textbf{0.63} \\
							\textsc{Fig.~3}  & 8.92 & \textbf{0.57} \\
							\bottomrule
						\end{tabular}
				}}
			
			\vspace{-0mm}
			\subcaption{} \label{tab:hypa}
		\end{subtable}
	\end{minipage}\\[4mm]
	\begin{minipage}[t]{0.5\linewidth}
		\begin{subtable}{\linewidth}
				\centering
				\scalebox{1.0}{\parbox{\linewidth}{
						\small
						\centering
						\def\arraystretch{1.1}
						\begin{tabular}{l@{\hspace{5mm}}c@{\hspace{5mm}}c}
							\toprule
							\textbf{Instance} & $\boldsymbol{t}_{\PCSAT{}}$ & $\boldsymbol{t}_{\tool{}}$ \\
							\midrule
							\textsc{TI\_GNI\_hFF}  &  26.2 & \textbf{0.58} \\
							\textsc{TI\_GNI\_hTT}  &  32.5 & \textbf{0.10} \\
							\textsc{TI\_GNI\_hFT}$^{\dagger, \ddagger}$ &  36.2 & \textbf{0.70} \\
							\textsc{TS\_GNI\_hFF}  &  36.6 & \textbf{0.58} \\
							\textsc{TS\_GNI\_hTT}$^{\ddagger}$  &  96.2 & \textbf{0.16} \\
							\textsc{TS\_GNI\_hFT}$^{\dagger, \ddagger}$  &  123.3 & \textbf{2.88} \\
							\textsc{TI\_GNI\_hTF}\!\!  & \textbf{26.1} & -\\
							\textsc{TS\_GNI\_hTF} & \textbf{44.1} & - \\
							\bottomrule
						\end{tabular}
				}}
			\subcaption{}\label{tab:constraintComp}
		\end{subtable}
	\end{minipage}%
	\begin{minipage}[t]{0.5\linewidth}
		
		\begin{subfigure}{0.9\linewidth}
			\begin{center}
				\includegraphics[width=\linewidth]{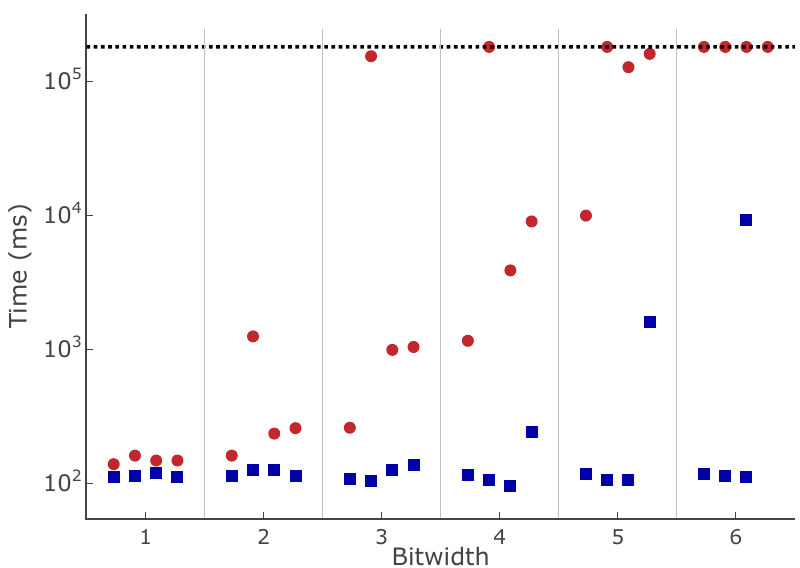}
			\end{center}

			\vspace{-2mm}
			\subcaption{}\label{fig:explicit}
		\end{subfigure}
		
	\end{minipage}
		
	\caption{In Tables \ref{tab:k-safety} and \ref{tab:hypa}, we compare \tool{} with \HyPA{} \cite{BeutnerF22} on $k$-safety and $\forall^*\exists^*$ properties, respectively.
		For instances marked with $\dagger$, \tool{} required additional user-provided invariant hints. 
		 In Table \ref{tab:constraintComp}, we compare \tool{} with \PCSAT{} \cite{UnnoTK21}.
		 For instances marked with $\ddagger$, \PCSAT{} required additional invariant hints. 
		In \Cref{fig:explicit}, we compare the running time of \tool{} (\scalebox{0.8}{\color{plotBlue}$\blacksquare$}) and \HyPro{} \cite{BeutnerF22b} (\scalebox{1.4}{\color{plotRed}$\bullet$}). We check each of the 4 GNI instances from \cite{BeutnerF22b} with varying bitwidth. The timeout is set to 3 min (marked by the horizontal dotted line). }
\end{figure}

\paragraph{Limitations of \tool{}'s Loop Alignment.}

Before we evaluate \tool{} on $\forall^*\exists^*$ properties, we investigate the counting-based loop alignment principle underlying \tool{}. 
We collect the $k$-safety benchmarks from \HyPA{} \cite{BeutnerF22} (which themself were collected from multiple sources \cite{FarzanV20,FarzanV19,ShemerGSV19,UnnoTK21}) and depict the verification results in Table \ref{tab:k-safety}. 
We observe that \tool{} can verify many of these instances.
As it explores a restricted class of loop alignments (guided by \nameref{rule:loop-count}), it is more efficient on the instances it can solve. 
However, for some of the instances, \tool{}'s counting-based alignment is insufficient.
Instead, these instances require a loop alignment that is context-dependent, i.e., the alignment is chosen based on the current state of the programs \cite{BeutnerF22,ShemerGSV19,FarzanV20,UnnoTK21}. 

\paragraph{\tool{} and \HyPA{}.}

\HyPA{} \cite{BeutnerF22} explores a liberal program alignment by exploring  a user-provided predicate abstraction.
The verification instances considered in \cite{BeutnerF22} include a range of $\forall^*\exists^*$ properties on very small programs, including, e.g., GNI and refinement properties.
In Table \ref{tab:hypa}, we compare the running time of \tool{} with that of \HyPA{} (using the user-defined predicates for its abstraction).\footnote{The properties checked by \HyPA{} \cite{BeutnerF22} are temporal, i.e., properties about the infinite execution of programs of the form $\mywhile(\top, \prog)$.
	To make such programs analyzable in \tool{} (which reasons about finite executions), we replaced the \emph{infinite} loop with a loop that executes $\prog$ some fixed (but arbitrary) number of times.}
We observe that \tool{} can verify the instances significantly quicker. 
Moreover, we stress that \tool{} solves a much more challenging problem as it analyzes the program \emph{fully automatically} without any user intervention.

\paragraph{\tool{} and \PCSAT{}.}

Unno et al.~\cite{UnnoTK21} present an extension of constraint Horn clauses, called pfwCSP, that is able to express a range of relational properties (including $\forall^*\exists^*$ properties). 
Their custom pfwCSP solver (called \PCSAT{}) instantiates predicates with user-provided templates. 
We compare \PCSAT{} and \tool{} in Table \ref{tab:constraintComp}.
\tool{} can verify 6 out of the 8 $\forall^*\exists^*$ instances.
\tool{} currently does not support termination proofs for loops in existentially quantified programs (which are needed for \textsc{TI\_GNI\_hTF} and \textsc{TS\_GNI\_hTF}), whereas \PCSAT{} features loop variant templates and can thus reason about the termination of existentially quantified loops in isolation. 
In the instances that \tool{} can solve, it is much faster. 
We conjecture that this is due to the fact that the constraints generated by \tool{} can be solved directly by SMT solvers, whereas \PCSAT{}'s pfwCSP constraints first require a custom template instantiation.

\paragraph{\tool{} and \HyPro{}.}

Programs whose variables have a finite domain (e.g., boolean) can be checked using explicit-state techniques developed for logics such as HyperLTL \cite{ClarksonFKMRS14}.
We verify GNI on variants of the four boolean programs from \cite{BeutnerF22b} with a varying number of bits.
We compare \tool{} with the HyperLTL verifier \HyPro{} \cite{BeutnerF22b}, which converts a program into an explicit-state transition system. 
We depict the results in \Cref{fig:explicit}.
We observe that, with increasing bitwidth, the running time of explicit-state model-checking increases exponentially (note that the scale is logarithmic).
In contrast, \tool{} can employ symbolic bitvector reasoning, resulting in orders of magnitude faster verification.

\section{Related Work}\label{sec:relatedWork}

Most methods for $k$-safety verification are centered around the self-composition of a program \cite{BartheDR11} and often improve upon a na\"ive self-composition by, e.g., exploiting  the commutativity of statements \cite{ShemerGSV19,FarzanV19,FarzanV20,EilersMH20}.
Relational program logics for $k$-safety offer a rich set of rules to \emph{over}-approximate the program behavior \cite{Benton04,Yang07,SousaD16,NagasamudramN21,DOsualdoFD22,AntonopoulosKLNNN23,Beringer11}.
Recently, much effort has been made to employ under-approximate methods that find bugs instead of proving their absence; so far, mostly for unary (non-hyper) properties \cite{OHearn20,VriesK11,RaadBDDOV20,MollerOH21,MaksimovicCLSG23,BruniGGR23,ZilbersteinDS23,Cousot24}.

Dardinier et al.~\cite{DardinierM23} propose \emph{Hyper Hoare Logic} -- a logic that can express \emph{arbitrary} hyperproperties, but requires manual deductive reasoning.
Dickerson et al.~\cite{DickersonYZD22} introduce RHLE, a program logic for the verification of $\forall^*\exists^*$ properties, focusing on the composition (and under-approximation) of function calls.
They present a weakest-precondition-based verification algorithm that aligns loops in lock-step via user-provided loop invariants. 
Unno et al.~\cite{UnnoTK21} present an extension of constraint Horn-clauses (called pfwCSP).
They show that pfwCSP can encode many relational verification conditions, including many hyperliveness properties like GNI (see \Cref{sec:implementation}). 
Compared to the pfwCSP encoding, we explore a less liberal program alignment (guided by \nameref{rule:loop-count}).
However, we gain the important advantage of generating standard (first-order) SMT constraints that can be handled using existing SMT solvers (which shows significant performance improvement, cf.~\Cref{sec:implementation}).

Most work on the verification of hyperliveness has focused on more general \emph{temporal} properties, i.e., properties that reason about infinite executions, based on logics such as HyperLTL \cite{ClarksonFKMRS14,FinkbeinerRS15,BeutnerF23}.
Coenen et al.~\cite{CoenenFST19} study a method for verifying hyperliveness  in \emph{finite}-state transition systems using strategies to resolve existential quantification.
This approach is also applicable to infinite-state systems by means of an abstraction \cite{BeutnerF22,ItzhakySV23} (see \HyPA{} in \Cref{sec:implementation}).
Bounded model-checking (BMC) for hyperproperties \cite{HsuSB21} unrolls the system to a fixed bound and can, e.g., find violations to GNI.
Existing BMC tools target finite-state (boolean) systems and construct QBF formulas; lifting this to support infinite-state systems by constructing SMT constraints is an interesting future work and could, e.g., complement \tool{} in the refutation of FEHTs.

\section{Conclusion}

We have studied the automated program verification of relational $\forall^*\exists^*$ properties.
We developed a constraint-based verification algorithm that is rooted in a sound-and-complete program logic and uses a (parametric) postcondition computation.
Our experiments show that -- while our logic-guided tool explores a restricted class of possible loop alignments -- it succeeds in many of the instances we tested.
Moreover, the use of off-the-shelf SMT solvers results in faster verification, paving the way toward a future of fully automated tools that can check important hyperliveness properties such as GNI and opacity.

\subsubsection*{Acknowledgments.}

This work was supported by the European Research Council (ERC) Grant HYPER (101055412), and by the German Research Foundation (DFG) as part of TRR 248 (389792660).

\subsubsection*{Data Availability Statement.}

\tool{} is available at \cite{tool}.

\bibliographystyle{splncs04}
\bibliography{references}

\iffullversion

\newpage

\appendix

\section{Program Semantics}\label{app:semantics}

For a state $\sigma$, variable $x \in \progVars$, and $z \in \intSet$ we write $\sigma[x \mapsto z]$ for the state in which we update the value of $x$ to $z$.
Given an arithmetic (resp.~boolean) expression $e$ (resp.~$b$), we write $\bsem{e}(\sigma) \in \intSet$ (resp.~$\bsem{b}(\sigma) \in \bool$) for the value of this expression in $\sigma$.
For a program $\prog$, we define the semantics $\sem{\prog} \subseteq (\progVars \to \intSet) \times (\progVars \to \intSet)$ inductively using the rules in \Cref{fig:sem}.

\begin{figure}[!t]
	\begin{minipage}{0.4\textwidth}
		\begin{prooftree}
			\AxiomC{}
			\UnaryInfC{$\sem{x \myassign e}(\sigma, \sigma[x \mapsto \bsem{e}(\sigma)])$}
		\end{prooftree}
	\end{minipage}%
	\begin{minipage}{0.3\textwidth}
		\begin{prooftree}
			\AxiomC{$z \in \intSet$}
			\UnaryInfC{$\sem{x \myassign \star}(\sigma, \sigma[x \mapsto z])$}
		\end{prooftree}
	\end{minipage}%
	\begin{minipage}{0.3\textwidth}
		\begin{prooftree}
			\AxiomC{$\bsem{b}(\sigma) = \top$}
			\UnaryInfC{$\sem{\myassume(b)}(\sigma, \sigma)$}
		\end{prooftree}
	\end{minipage}
	
	\vspace{5mm}
	
	\begin{minipage}{0.5\textwidth}
		\begin{prooftree}
			\AxiomC{}
			\UnaryInfC{$\sem{\myskip}(\sigma, \sigma)$}
		\end{prooftree}
	\end{minipage}%
	\begin{minipage}{0.5\textwidth}
		\begin{prooftree}
			\AxiomC{$\sem{\prog}(\sigma, \sigma')$}
			\AxiomC{$\sem{\progg}(\sigma', \sigma'')$}
			\BinaryInfC{$\sem{\prog \mysemi \progg}(\sigma, \sigma'')$}
		\end{prooftree}
	\end{minipage}%
	\vspace{5mm}

	\begin{minipage}{0.5\textwidth}
		\begin{prooftree}
			\AxiomC{$\bsem{b}(\sigma) = \top$}
			\AxiomC{$\sem{\prog}(\sigma, \sigma')$}
			\BinaryInfC{$\sem{\myif(b, \prog, \progg)}(\sigma, \sigma')$}
		\end{prooftree}
	\end{minipage}%
	\begin{minipage}{0.5\textwidth}
		\begin{prooftree}
			\AxiomC{$\bsem{b}(\sigma) = \bot$}
			\AxiomC{$\sem{\progg}(\sigma, \sigma')$}
			\BinaryInfC{$\sem{\myif(b, \prog, \progg)}(\sigma, \sigma')$}
		\end{prooftree}
	\end{minipage}
	
	\vspace{5mm}
	
	\begin{minipage}{0.3\textwidth}
		\begin{prooftree}
			\AxiomC{$\bsem{b}(\sigma) = \bot$}
			\UnaryInfC{$\sem{\mywhile(b, \prog)}(\sigma, \sigma)$}
		\end{prooftree}
	\end{minipage}%
	\begin{minipage}{0.7\textwidth}
		\begin{prooftree}
			\AxiomC{$\bsem{b}(\sigma) = \top$}
			\AxiomC{$\sem{\prog}(\sigma, \sigma')$}
			\AxiomC{$\sem{\mywhile(b, \prog)}(\sigma', \sigma'')$}
			\TrinaryInfC{$\sem{\mywhile(b, \prog)}(\sigma, \sigma'')$}
		\end{prooftree}
	\end{minipage}

	\vspace{3mm}
	
	\caption{Operational program semantics}\label{fig:sem}
\end{figure}

\section{Details on FEHL}\label{app:fehl}

\subsection{Full Collection of Proof Rules}

\begin{figure}[!t]
	\def\ScoreOverhang{2pt}
	\footnotesize
	\vspace{-3mm}
	\begin{minipage}{0.33\linewidth}
		\centering
		\scalebox{0.85}{\parbox{\linewidth}{
				\begin{topprooftree}{\labelText{\texttt{($\forall$-Reorder)}}{rule:forall-comm}
						\labelTextGrey{\texttt{($\exists$-Reorder)}}{rule:exists-comm}}
					\AxiomC{$\vdash\rel{\Phi}{\univC_2 \oast \univC_1}{\existsC}{\Psi}$}
					\UnaryInfC{$\vdash\rel{\Phi}{\univC_1 \oast \univC_2}{\existsC}{\Psi}$}
				\end{topprooftree}
		}}
	\end{minipage}%
	\begin{minipage}{0.33\linewidth}
		\centering
		\scalebox{0.85}{\parbox{\linewidth}{
				\begin{topprooftree}{\labelText{\texttt{($\forall$-Skip-I)}}{rule:forall-intro} \labelTextGrey{\texttt{($\exists$-Skip-I)}}{rule:exists-intro}}
					\AxiomC{$\vdash\rel{\Phi}{\prog\mysemi\myskip \oast \univC}{\existsC}{\Psi}$}
					\UnaryInfC{$\vdash\rel{\Phi}{\prog \oast \univC}{\existsC}{\Psi}$}
				\end{topprooftree}
		}}
	\end{minipage}%
	\begin{minipage}{0.33\linewidth}
		\centering
		\scalebox{0.85}{\parbox{\linewidth}{
				\begin{topprooftree}{\labelText{\texttt{($\forall$-Skip-E)}}{rule:forall-elim} \labelTextGrey{\texttt{($\exists$-Skip-E)}}{rule:exists-elim}}
					\AxiomC{$\vdash \rel{\Phi}{\univC}{\existsC}{\Psi}$}
					\UnaryInfC{$\vdash \rel{\Phi}{\myskip \oast \univC}{\existsC}{\Psi}$}
				\end{topprooftree}
		}}
	\end{minipage}%

	\vspace{-2mm}
	
	\begin{minipage}{0.2\linewidth}
		\centering
		\scalebox{0.85}{\parbox{\linewidth}{
				\begin{topprooftree}{\labelText{\texttt{(Done)}}{rule:done}}
					\AxiomC{$\vdash \rel{\Phi}{\epsilon}{\epsilon}{\Phi}$}
				\end{topprooftree}
		}}
	\end{minipage}
	\begin{minipage}{0.4\linewidth}
		\vspace{3mm}
		\centering
		\scalebox{0.85}{\parbox{\linewidth}{
				\def\defaultHypSeparation{\hskip .15in}
				\begin{topprooftree}{\labelText{\texttt{(Cons)}}{rule:cons}}
					\AxiomC{\stackanchor{$\Phi \Rightarrow \Phi'$}{$\Psi' \Rightarrow \Psi$}}
					\AxiomC{$\vdash\rel{\Phi'}{\univC}{\existsC}{\Psi'}$}
					\BinaryInfC{$\vdash\rel{\Phi}{\univC}{\existsC}{\Psi}$}
				\end{topprooftree}
		}}
	\end{minipage}%
	\begin{minipage}{0.4\linewidth}
		\centering
		\scalebox{0.85}{\parbox{\linewidth}{
				\def\defaultHypSeparation{\hskip .2in}
				\begin{topprooftree}{\labelText{\texttt{($\forall$-If)}}{rule:forall-if-app}
						\labelTextGrey{\texttt{($\exists$-If)}}{rule:exists-if-app}}
					\AxiomC{\stackanchor{$\vdash\rel{\Phi \land b}{\prog_1\mysemi \prog_3 \oast \univC}{\existsC}{\Psi}$}{$\vdash\rel{\Phi \land \neg b}{\prog_2\mysemi \prog_3 \oast \univC}{\existsC}{\Psi}$}}
					\UnaryInfC{$\vdash\rel{\Phi}{\myif(b, \prog_1, \prog_2)\mysemi \prog_3 \oast \univC}{\existsC}{\Psi}$}
				\end{topprooftree}
		}}
	\end{minipage}%

	\vspace{-2mm}

	\begin{minipage}{0.33\linewidth}
		\centering
		\scalebox{0.85}{\parbox{\linewidth}{
		\begin{topprooftree}{\labelText{\texttt{($\forall$-$\mysemi$-Assoc)}}{rule:forall-seq-assoc}\labelTextGrey{\texttt{($\exists$-$\mysemi$-Assoc)}}{rule:exists-seq-assoc}}
			\AxiomC{$\vdash \rel{\Phi}{\prog_1 \mysemi (\prog_2\mysemi \prog_3) \oast \univC}{\existsC}{\Phi}$}
			\UnaryInfC{$\vdash \rel{\Phi}{(\prog_1 \mysemi \prog_2)\mysemi \prog_3 \oast \univC}{\existsC}{\Phi}$}
		\end{topprooftree}
	}}
	\end{minipage}%
	\begin{minipage}{0.33\linewidth}
		\vspace{1.5mm}
		\centering
		\scalebox{0.85}{\parbox{\linewidth}{
				\def\defaultHypSeparation{\hskip .2in}
				\begin{topprooftree}{\labelText{\texttt{($\forall$-Step)}}{rule:forall-step-app}}
					\AxiomC{\stackanchor{$\vdash\tripH{\Phi}{\prog_1}{\Phi'}$}{$\vdash\rel{\Phi'}{\prog_2 \oast \univC}{\existsC}{\Psi}$}}
					\UnaryInfC{$\vdash\rel{\Phi}{\prog_1 \mysemi \prog_2 \oast \univC}{\existsC}{\Psi}$}
				\end{topprooftree}
		}}
	\end{minipage}
	\begin{minipage}{0.33\linewidth}
		\centering
		\scalebox{0.85}{\parbox{\linewidth}{
				\def\defaultHypSeparation{\hskip .2in}
				\begin{topprooftree}{\labelText{\texttt{($\exists$-Step)}}{rule:exists-step-app}}
					\AxiomC{\stackanchor{$\vdash\tripN{\Phi}{\prog_1}{\Phi'}$}{$\vdash\rel{\Phi'}{\univC}{\prog_2 \oast \existsC}{\Psi}$}}
					\UnaryInfC{$\vdash\rel{\Phi}{\univC}{\prog_1\mysemi\prog_2 \oast \existsC}{\Psi}$}
				\end{topprooftree}
		}}
	\end{minipage}

	\vspace{-4mm}

	\begin{minipage}{0.5\linewidth}
		\vspace{3mm}
		\centering
		\scalebox{0.85}{\parbox{\linewidth}{
				\def\defaultHypSeparation{\hskip .2in}
				\begin{topprooftree}{\labelText{\texttt{($\forall$-Assume)}}{rule:forall-assume-app}}
					\AxiomC{$\vdash\rel{\Phi \land b}{\prog \oast \univC}{\existsC}{\Psi}$}
					\UnaryInfC{$\vdash\rel{\Phi}{\myassume(b) \mysemi \prog \oast \univC}{\existsC}{\Psi}$}
				\end{topprooftree}
		}}
	\end{minipage}%
	\begin{minipage}{0.5\linewidth}
		\centering
		\scalebox{0.85}{\parbox{\linewidth}{
				\def\defaultHypSeparation{\hskip .2in}
				\begin{topprooftree}{\labelText{\texttt{($\exists$-Assume)}}{rule:exists-assume-app}}
					\AxiomC{$\Phi \Rightarrow b$}
					\AxiomC{$\vdash\rel{\Phi}{\univC}{\prog \oast \existsC}{\Psi}$}
					\BinaryInfC{$\vdash\rel{\Phi}{\univC}{\myassume(b) \mysemi \prog \oast \existsC}{\Psi}$}
				\end{topprooftree}
		}}
	\end{minipage}%
	
	\vspace{-4mm}
	
	\begin{minipage}{0.4\linewidth}
		\vspace{3mm}
		\centering
		\scalebox{0.85}{\parbox{\linewidth}{
				\def\defaultHypSeparation{\hskip .05in}
				\begin{topprooftree}{\labelText{\texttt{($\forall$-Choice)}}{rule:forall-inf-nd-app}}
					\AxiomC{$\vdash\rel{\exists x\ldot \Phi}{\prog \oast \univC}{\existsC}{\Psi}$}
					\UnaryInfC{$\vdash\rel{\Phi}{x \myassign \star\mysemi \prog \oast \univC}{\existsC}{\Psi}$}
				\end{topprooftree}
		}}
	\end{minipage}
	\begin{minipage}{0.6\linewidth}
		\centering
		\scalebox{0.85}{\parbox{\linewidth}{
				\def\defaultHypSeparation{\hskip .2in}
				\begin{topprooftree}{\labelText{\texttt{($\exists$-Choice)}}{rule:exists-inf-nd-app}}
					\AxiomC{$x \not\in \mathit{Vars}(e)$}
					\AxiomC{$\vdash\rel{(\exists x\ldot \Phi) \land x = e}{\univC}{\prog \oast \existsC}{\Psi}$}
					\BinaryInfC{$\vdash\rel{\Phi}{\univC}{x \myassign \star\mysemi \prog  \oast \existsC}{\Psi}$}
				\end{topprooftree}
		}}
	\end{minipage}

	\vspace{-4mm}

	\begin{minipage}{0.4\linewidth}
		\vspace{3mm}
		\centering
		\scalebox{0.85}{\parbox{\linewidth}{
				\def\defaultHypSeparation{\hskip .05in}
				\begin{topprooftree}{\labelText{\texttt{($\forall$-Havoc)}}{rule:forall-havoc}}
					\AxiomC{$\vdash\rel{\bigexists_{x \in \mathit{ModVars}(\prog)} x\ldot \Phi}{\univC}{\existsC}{\Psi}$}
					\UnaryInfC{$\vdash\rel{\Phi}{\prog \oast \univC}{\existsC}{\Psi}$}
				\end{topprooftree}
		}}
	\end{minipage}
	\begin{minipage}{0.6\linewidth}
		\centering
		\scalebox{0.85}{\parbox{\linewidth}{
				\def\ScoreOverhang{0pt}
				\def\defaultHypSeparation{\hskip .1in}
				\begin{topprooftree}{\labelText{\texttt{($\exists$-Havoc)}}{rule:exists-havoc}}
					\AxiomC{$\vdash\tripN{\Phi}{\prog}{\top}$}
					\AxiomC{$\vdash \rel{\bigexists_{x \in \mathit{ModVars}(\prog)} x\ldot \Phi}{\univC}{\existsC}{\Psi}$}
					\BinaryInfC{$\vdash \rel{\Phi}{\univC}{\prog \oast \existsC}{\Psi}$}
				\end{topprooftree}
		}}
	\end{minipage}

	\vspace{-2mm}

	\caption{Full set of core proof rules of FEHL }\label{fig:core_rules_full}
\end{figure}

A full overview (subsuming those presented in \Cref{fig:core_rules}) of FEHL is given in \Cref{fig:core_rules_full}.
Rule \nameref{rule:forall-comm} allows the reordering of universally quantified programs (recall that all programs operate on disjoint variables).
There also exists an analogous rule \nameref{rule:exists-comm} that handles the reordering of existentially quantified programs; defined as expected.
We omit such obvious rule and only indict their existence by writing their name next to their universal counterpart. 
\nameref{rule:cons} corresponds to the standard structural rule of consequence which strengthens the precondition and weakens the postcondition.
\nameref{rule:forall-intro} and \nameref{rule:exists-intro} rewrite a program $\prog$ into $\prog \mysemi \myskip$.
Often, this rewrite is an important first step, as most other rules target programs of the form $\prog_1\mysemi\prog_2$.
Conversely, \nameref{rule:forall-elim} and \nameref{rule:exists-elim} eliminate programs consisting of a single $\myskip$ instruction.
Rules \nameref{rule:forall-seq-assoc} and \nameref{rule:exists-seq-assoc} exploit the associativity of sequential composition. 
Using these rules, we can always bring the form in the form $\prog_1 \mysemi \prog_2$ where $\prog_1 \neq \_ \mysemi \_$.
\nameref{rule:done} concludes any judgments in case we are considering the trivial $0$-fold self-composition with the empty list of programs on either side (denoted by $\epsilon$).

\paragraph{Havoc Rules.}
\nameref{rule:forall-havoc} and \nameref{rule:exists-havoc} allow us to skip the analysis of (potentially large) code fragments by invalidating all knowledge about variables modified within $\prog$.
Here we write $\mathit{ModVars}(\prog)$ for all variables that are changed in $\prog$, i.e., appear on the left-hand side of a deterministic or nondeterministic assignment. 
\nameref{rule:forall-havoc} removes an (arbitrary large) program at the expense of enlarging the precondition.
Its soundness can be argued easily:
If we take \emph{any} state $\sigma$ that satisfies $\Phi$ and consider any execution of $\prog$ from $\sigma$, $\prog$ will either not terminate (in which case the FEHT is vacuously valid as $\prog$'s execution are quantified universally), or it will terminate in some state $\sigma'$ that differs from $\sigma$ only in the values of $\mathit{ModVars}(\prog)$, and thus satisfies $\exists_{x \in \mathit{ModVars}(\prog)} x\ldot \Phi$. 
For existentially quantified copies, FEHTs postulate the \emph{existence} of a set of executions in $\prog$, which goes beyond partial correctness and requires us to reason about the termination of $\prog$.
Consequently, \nameref{rule:exists-havoc} adds an additional requirement compared to \nameref{rule:forall-havoc}:
In addition to invalidating all knowledge about modified variables, \nameref{rule:exists-havoc} requires us to prove the validity of the UHT $\tripN{\Phi}{\prog}{\top}$; effectively stating that $\prog$ \emph{can} terminate from all states in $\Phi$.
Without this additional restriction, we could could, e.g., derive the \emph{invalid} FEHT $\rel{\top}{\myskip}{\mywhile(\top, \myskip)}{\top}$.

\subsection{Soundness}

\soundness*
\begin{proof}
	We prove this statement by induction of the derivation of $\vdash \rel{\Phi}{\univC}{\existsC}{\Psi}$.
	We do a case analysis on the topmost rule. 
	Note that we omit obvious or analogous cases. 
	
	\begin{itemize}
		\item \nameref{rule:forall-comm}: By IH, we get that $\rel{\Phi}{\univC_2 \oast \univC_1}{\existsC}{\Psi}$ is valid.
		Now by the semantics of FEHTs, $\rel{\Phi}{\univC_2 \oast \univC_1}{\existsC}{\Psi}$ is valid iff $\rel{\Phi}{\univC_1 \oast \univC_2}{\existsC}{\Psi}$ is valid (as all programs operate on disjoint variables), as required.
		\item \nameref{rule:forall-intro}: By IH we have that $\rel{\Phi}{\prog\mysemi \myskip \oast \univC}{\existsC}{\Psi}$ is valid. 
		By the program semantics we have $\sem{\prog\mysemi \myskip} = \sem{\prog}$, and so the valdity $\rel{\Phi}{\prog \oast \univC}{\existsC}{\Psi}$ follows directly.
		\item \nameref{rule:forall-seq-assoc}: Follows directly as $\sem{(\prog_1 \mysemi \prog_2) \mysemi \prog_3} = \sem{\prog_1 \mysemi (\prog_2 \mysemi \prog_3)}$.
		\item \nameref{rule:cons}: By IH we get that $\rel{\Phi'}{\univC}{\existsC}{\Psi'}$ is valid. 
		We show that $\rel{\Phi}{\univC}{\existsC}{\Psi}$ is valid. 
		Take any states $\sigma_1, \ldots, \sigma_{k+l}$ such that $\bigoplus_{i=1}^{k+l} \sigma_i \models \Phi$, and any states $\sigma_1', \ldots, \sigma_k'$ such that $\sem{\prog_i}(\sigma_i, \sigma_i')$ for all $i \in [1, k]$ (as in the presumption of FEHT validity).
		As $\Phi \Rightarrow \Phi'$, we get that $\bigoplus_{i=1}^{k+l} \sigma_i \models \Phi'$. 
		As $\rel{\Phi'}{\univC}{\existsC}{\Psi'}$ is valid there thus exist final states $\sigma_{k+1}', \ldots, \sigma_{k+l}'$ such that $\sem{\prog_i}(\sigma_i, \sigma_i')$ for all $i \in [k+1, k+l]$ and $\bigoplus_{i=1}^{k+l} \sigma_i' \models \Psi'$.
		As $\Psi' \Rightarrow \Psi$ we have  $\bigoplus_{i=1}^{k+l} \sigma_i' \models \Psi$ and can thus take $\sigma_{k+1}', \ldots, \sigma_{k+l}'$ as the desired final states to show the validity of  $\rel{\Phi}{\univC}{\existsC}{\Psi}$.
		
		\item \nameref{rule:forall-if-app}:
		By IH we get that $\rel{\Phi \land b}{\prog_1 \mysemi \prog_3 \oast\univC}{\existsC}{\Psi}$ and $\rel{\Phi \land \neg b}{\prog_2 \mysemi \prog_3 \oast\univC}{\existsC}{\Psi}$ are valid. 
		We show that $\rel{\Phi}{\myif(b, \prog_1, \prog_2) \mysemi \prog_3 \oast\univC}{\existsC}{\Psi}$ is valid. 
		Let $\sigma_1, \ldots, \sigma_{k+l}$ be states such that $\bigoplus_{i=1}^{k+l} \sigma_i \models \Phi$ and consider any final states $\sigma_1', \ldots, \sigma_k'$ such that $\sem{\prog_i}(\sigma_i, \sigma_i')$ for all $i \in [1, k]$.
		Now either $\bigoplus_{i=1}^{k+l} \sigma_i \models \Phi \land b$ or $\bigoplus_{i=1}^{k+l} \sigma_i \models \Phi \land \neg b$.
		In the first case former case, we can use the witness final states given by the validity of $\rel{\Phi \land b}{\prog_1 \mysemi \prog_3 \oast\univC}{\existsC}{\Psi}$ to show the validity of $\rel{\Phi}{\myif(b, \prog_1, \prog_2) \mysemi \prog_3 \oast\univC}{\existsC}{\Psi}$.
		Note,  for any state $\sigma \models b$ we get that $\sem{\myif(b, \prog_1, \prog_2)\mysemi \prog_3}(\sigma, \sigma')$ iff $\sem{\prog_1\mysemi \prog_3}(\sigma, \sigma')$ for any state $\sigma'$.
		The case where $\bigoplus_{i=1}^{k+l} \sigma_i \models \Phi \land \neg b$ is analogous. 
		
		\item \nameref{rule:forall-step-app}:
		By IH we get that $\rel{\Phi'}{\prog_2 \oast \univC}{\existsC}{\Psi}$ is valid, and by the assumption that $\vdash \tripH{\cdot}{\cdot}{\cdot}$ is sound, the HT $\tripH{\Phi}{\prog_1}{\Phi'}$ is valid. 
		We show that $\rel{\Phi}{\prog_1 \mysemi \prog_2 \oast \univC}{\existsC}{\Psi}$ is valid. 
		Let $\sigma_1, \ldots, \sigma_{k+l}$ be states such that $\bigoplus_{i=1}^{k+l} \sigma_i \models \Phi$ and any final states $\sigma_1', \ldots, \sigma_k'$ such that $\sem{\prog_i}(\sigma_i, \sigma_i')$ for all $i \in [1, k]$.
		In particular $\sem{\prog_1\mysemi \prog_2}(\sigma_1, \sigma_1')$.
		By the semantics there thus exists a state $\sigma_1''$ such that $\sem{\prog_1}(\sigma_1, \sigma_1'')$ and $\sem{\prog_2}(\sigma_1'', \sigma_1')$. 
		As the HT $\tripH{\Phi}{\prog_1}{\Phi'}$ is valid, we have that $\sigma_1'' \models \Phi'$. 
		So $\sigma_1'' \oplus \bigoplus_{i=2}^{k+l} \sigma_i \models \Phi'$ (note that $\prog_1$ does not manipulate any variables in $\sigma_2, \ldots, \sigma_{k+l}$). 
		We can thus use the witnessing final states provided by the validity of $\rel{\Phi'}{\prog_2 \oast \univC}{\existsC}{\Psi}$ to show that $\rel{\Phi}{\prog_1 \mysemi \prog_2 \oast \univC}{\existsC}{\Psi}$ is valid.
		
		\item \nameref{rule:forall-assume-app}:
		We show $\rel{\Phi}{\myassume(b) \mysemi \prog \oast \univC}{\existsC}{\Psi}$ is valid.
		For this, let $\sigma_1, \ldots, \sigma_{k+l}$ be states such that $\bigoplus_{i=1}^{k+l} \sigma_i \models \Phi$ and consider arbitrary final states $\sigma_1', \ldots, \sigma_k'$ such that $\sem{\prog_i}(\sigma_i, \sigma_i')$ for all $i \in [1, k]$.
		In particular, $\sem{\myassume(b)\mysemi \prog}(\sigma_1, \sigma_1')$ so (by the semantics of \myassume) $\sigma_1 \models b$. 
		For any state $\sigma$ with $\sigma \models b$ we have $\sem{\myassume(b)\mysemi \prog}(\sigma, \sigma')$ iff $\sem{\prog}(\sigma, \sigma')$ for any state $\sigma'$.
		We can thus use the witness final states given by the validity of $\rel{\Phi \land b}{\prog \oast \univC}{\existsC}{\Psi}$ to show that $\rel{\Phi}{\myassume(b) \mysemi \prog \oast \univC}{\existsC}{\Psi}$ is valid.
		
		\item \nameref{rule:exists-assume-app}:
		We show $\rel{\Phi}{\univC}{\myassume(b) \mysemi \prog \oast \existsC}{\Psi}$ is valid.
		For this, let $\sigma_1, \ldots, \sigma_{k+l}$ be states such that $\bigoplus_{i=1}^{k+l} \sigma_i \models \Phi$ and consider arbitrary final states $\sigma_1', \ldots, \sigma_k'$ such that $\sem{\prog_i}(\sigma_i, \sigma_i')$ for all $i \in [1, k]$.
		By assumption we have $\Phi \Rightarrow b$, so $\sigma_{k+1} \models b$.
		For any state $\sigma$ with $\sigma \models b$ we have $\sem{\myassume(b)\mysemi \prog}(\sigma, \sigma')$ iff $\sem{\prog}(\sigma, \sigma')$ for any state $\sigma'$.
		As $\sigma_{k+1} \models b$, we can use the final states given by the validity of $\rel{\Phi}{\univC}{\prog \oast \existsC}{\Psi}$ to show that $\rel{\Phi}{\univC}{\myassume(b) \mysemi \prog \oast \existsC}{\Psi}$ is valid.
		
		\item \nameref{rule:forall-inf-nd-app}:
		We show that $\rel{\Phi}{x \myassign \star \mysemi \prog \oast \univC}{\existsC}{\Psi}$ is valid. 
		For this, let $\sigma_1, \ldots, \sigma_{k+l}$ be states such that $\bigoplus_{i=1}^{k+l} \sigma_i \models \Phi$ and let $\sigma_1', \ldots, \sigma_k'$ be any final states such that $\sem{\prog_i}(\sigma_i, \sigma_i')$ for all $i \in [1, k]$.
		In particular, $\sem{x \myassign \star \mysemi \prog}(\sigma_1, \sigma_1')$, so by the semantics there exists a state $\sigma_1''$ such that $\sem{x \myassign \star}(\sigma_1, \sigma_1'')$ and $\sem{\prog}(\sigma_1'', \sigma_1')$.
		In particular (by the semantics of $x \myassign \star$), we get that $\sigma_1'' = \sigma_1[x \mapsto z]$ for some $z \in \mathbb{Z}$.
		Now $\sigma_1 \oplus \bigoplus_{i=2}^{k+l} \sigma_i \models \Phi$ so $\sigma_1'' \oplus \bigoplus_{i=2}^{k+l} \sigma_i \models \exists x \ldot\Phi$.
		We can thus use the final states given by the validity of $\rel{\exists x \ldot \Phi}{\prog \oast \univC}{\existsC}{\Psi}$ to show that $\rel{\Phi}{x \myassign \star \mysemi \prog \oast \univC}{\existsC}{\Psi}$ is valid.
		
		\item \nameref{rule:exists-inf-nd-app}:
		We show that $\rel{\Phi}{\univC}{x \myassign \star \mysemi \prog \oast \existsC}{\Psi}$ is valid. 
		For this, let $\sigma_1, \ldots, \sigma_{k+l}$ be states such that $\bigoplus_{i=1}^{k+l} \sigma_i \models \Phi$ and let $\sigma_1', \ldots, \sigma_k'$ be any final states such that $\sem{\prog_i}(\sigma_i, \sigma_i')$ for all $i \in [1, k]$.
		Define $\sigma_{k+1}'' := \sigma_{k+1}[x \mapsto \bsem{e}(\sigma_{k+1})]$ (where $e$ is the expression used in the rule application).
		As $\bigoplus_{i=1}^{k+l} \sigma_i \models \Phi$ we have that $\sigma_{k+1}'' \oplus \bigoplus_{i=1 \mid i \neq k+1}^{k+l} \sigma_i \models (\exists x. \Phi) \land x = e$.
		By IH we can thus find final states $\sigma_{k+1}', \ldots \sigma_{k+l}'$ such that $\bigoplus_{i=1}^{k+l} \sigma_i' \models \Psi$.
		In particular, we get that $\sem{\prog}(\sigma_{k+1}'', \sigma_{k+1}')$ and thus $\sem{x \myassign \star \mysemi \prog}(\sigma_{k+1}, \sigma_{k+1}')$. 
		We can thus use $\sigma_{k+1}', \ldots \sigma_{k+l}'$ as witnesses to show the validty of $\rel{\Phi}{\univC}{x \myassign \star \mysemi \prog \oast \existsC}{\Psi}$ as required. \qed
	\end{itemize}
\end{proof}

\subsection{Completeness}\label{app:fehl_compltness}

In this subsection we prove FEHL complete. 
Similar to \cite{NagasamudramN21}, we show completeness by giving a single rule that encodes the composition of all programs. 
Consider the following rule:
\begin{center}
	\scalebox{0.9}{\parbox{\linewidth}{
			\begin{topprooftree}{\labelText{\texttt{(Self-Composition)}}{rule:comp}}
				\AxiomC{$\Phi = \theta_1, \ldots, \theta_{k+l+1} = \Psi$}
				\AxiomC{$\Big[\vdash \tripH{\Theta_j}{\prog_j}{\Theta_{j+1}}\Big]_{j = 1}^k$}
				\AxiomC{$\Big[\vdash \tripN{\Theta_j}{\prog_j}{\Theta_{j+1}}\Big]_{j = k+1}^{k+l}$}
				\TrinaryInfC{$\vdash \rel{\Phi}{\prog_1 \oast \cdots \oast \prog_k}{\prog_{k+1} \oast \cdots \oast \prog_{k+l}}{\Psi}$}
			\end{topprooftree}
	}}
\end{center}
That is, we require $k+l+1$ assertions $\theta_1, \ldots, \theta_{k+l+1}$, such that $\theta_1 = \Phi$ and $\theta_{k+l+1} = \Psi$. 
We then iteratively step through programs $\prog_1, \ldots, \prog_k$ using HTs, followed by an analysis of $\prog_{k+1}, \ldots, \prog_{k+l}$ using UHTs. 

\begin{proposition}\label{lem:comp_comp}
	Assume that $\vdash \tripH{\cdot}{\cdot}{\cdot}$ and $\vdash \tripN{\cdot}{\cdot}{\cdot}$ are complete proof systems for HTs and UHTs, respectively. 
	The proof system consisting only of \nameref{rule:comp} is complete for FEHTs.
\end{proposition}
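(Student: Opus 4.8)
The plan is to prove completeness by exhibiting, for any valid FEHT, a concrete choice of the intermediate assertions $\theta_1, \ldots, \theta_{k+l+1}$ demanded by \nameref{rule:comp}, and then discharging each of its premises via the assumed complete proof systems for HTs and UHTs. Throughout I rely on two facts already available: the background theory is expressive enough that every (relational) set of states is definable by an assertion, and each $\prog_j$ acts only on $\progVars_j$, so that stepping through one program leaves every other coordinate of a relational state untouched. The overall structure mirrors the single-composition-rule completeness argument of \cite{NagasamudramN21}.

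For the universal prefix I would take the intermediate assertions to be ordinary strongest postconditions: set $\theta_1 := \Phi$ and, for $j \in [2, k+1]$, let $\theta_j$ be the strongest postcondition of $\prog_1 \mysemi \cdots \mysemi \prog_{j-1}$ (which is well defined as a composition over the disjoint sets $\progVars_1, \ldots, \progVars_{j-1}$) with respect to $\Phi$; this set is definable by expressiveness. By the defining property of the strongest postcondition, each HT $\tripH{\theta_j}{\prog_j}{\theta_{j+1}}$ for $j \in [1,k]$ is valid, hence derivable since $\vdash \tripH{\cdot}{\cdot}{\cdot}$ is complete. Write $\mathit{SP} := \theta_{k+1}$ for the assertion reached after the entire universal part.

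For the existential suffix I would instead work backward from $\Psi$. For $j \in [k+1, k+l+1]$, define $\theta_j$ to hold of a relational state $\bigoplus_{i} \rho_i$ exactly when the remaining existential programs can still reach $\Psi$, namely when there exist $\sigma'_j, \ldots, \sigma'_{k+l}$ with $\sem{\prog_i}(\rho_i, \sigma'_i)$ for all $i \in [j, k+l]$ and $(\bigoplus_{i<j} \rho_i) \oplus (\bigoplus_{i\geq j} \sigma'_i) \models \Psi$; this is again definable by expressiveness, and $\theta_{k+l+1} = \Psi$. For $j \in [k+2, k+l]$ the UHT $\tripN{\theta_j}{\prog_j}{\theta_{j+1}}$ is then immediate: given $\rho \models \theta_j$ with witnesses $\sigma'_j, \ldots, \sigma'_{k+l}$, updating coordinate $j$ to $\sigma'_j$ yields a state $\rho'$ with $\sem{\prog_j}(\rho, \rho')$ and $\rho' \models \theta_{j+1}$, reusing the same remaining witnesses and exploiting the disjointness of the $\progVars_i$.

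The one place where validity of the FEHT is actually consumed --- and what I expect to be the crux --- is the junction UHT $\tripN{\mathit{SP}}{\prog_{k+1}}{\theta_{k+2}}$, where the forward definition of $\theta_{k+1} = \mathit{SP}$ must connect to the backward definition of $\theta_{k+2}$. Here every state in $\mathit{SP}$ is, by construction, a tuple of final states of the universal programs reachable from some $\Phi$-state; FEHT validity then guarantees that from it the existential programs admit executions landing in $\Psi$. Picking such executions and updating coordinate $k+1$ produces the required successor in $\theta_{k+2}$, exactly as in the backward step, so this UHT is valid as well. By completeness of $\vdash \tripN{\cdot}{\cdot}{\cdot}$ all UHTs for $j \in [k+1, k+l]$ are derivable, and with $\theta_1 = \Phi$, $\theta_{k+l+1} = \Psi$, and every premise established, a single application of \nameref{rule:comp} yields $\vdash \rel{\Phi}{\univC}{\existsC}{\Psi}$. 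The main obstacle is purely the bookkeeping needed to thread one fixed choice of existential witnesses through the one-program-at-a-time UHT chain so that each \emph{individual} triple holds rather than merely the composite reachability; disjointness of variables is precisely what makes this bookkeeping go through.
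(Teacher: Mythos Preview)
Your argument is correct. The intermediate assertions you construct make every premise of \nameref{rule:comp} a valid HT or UHT, and completeness of the underlying unary systems then yields the derivation. The one point of sloppiness is that you define $\theta_{k+1}$ twice (once forward as $\mathit{SP}$ and once by the backward clause for $j \in [k+1, k+l+1]$); the text makes clear you intend the forward one, so this is only notational.

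Your route, however, differs from the paper's in how the existential part is handled. The paper proceeds \emph{forward} throughout: after reaching $\theta_{k+1} = \mathit{SP}$, it appeals to FEHT validity once to select, for every state in $\mathit{SP}$, a full witnessing chain of existential executions into $\Psi$, and then defines each $\theta_{j+1}$ as the image of $\theta_j$ under the $j$th step of these selected executions. You instead go \emph{backward}, taking $\theta_j$ to be a weakest-precondition-like set (``$\Psi$ is still reachable from here via $\prog_j,\ldots,\prog_{k+l}$''), and consume FEHT validity only at the junction $\tripN{\mathit{SP}}{\prog_{k+1}}{\theta_{k+2}}$. The paper's forward construction is perhaps more in keeping with the operational intuition of Skolemizing the existential witnesses, while your backward construction avoids an explicit global choice of witness chains and makes each individual UHT fall out of the definition of $\theta_j$ more mechanically. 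Either decomposition works; the crux, as you identified, is the single point where the forward strongest-postcondition meets the existential reachability condition.
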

\begin{proof}
	Assume that $\vdash \rel{\Phi}{\prog_1 \oast \cdots \oast \prog_k}{\prog_{k+1} \oast \cdots \oast \prog_{k+l}}{\Psi}$ is valid. 
	We show that the FEHT $\rel{\Phi}{\prog_1 \oast \cdots \oast \prog_k}{\prog_{k+1} \oast \cdots \oast \prog_{k+l}}{\Psi}$ is derivable using \nameref{rule:comp}.
	This derivation is similar to the proof in the case of $k$-safety \cite[Proposition 9]{NagasamudramN21}:
	For $j$ from $1$ to $k$, we define $\theta_{j+1}$ as all states reachable by executing $\prog_j$ from $\theta_j$. 
	Now as $\vdash \rel{\Phi}{\prog_1 \oast \cdots \oast \prog_k}{\prog_{k+1} \oast \cdots \oast \prog_{k+l}}{\Psi}$ is valid, from any state in $\theta_{k+1}$ we can find some execution of $\prog_{k+1}, \ldots, \prog_{k+l}$ that end in some relational state in $\Psi$. 
	For $j$ from $k+1$ to $k+l$, we can thus define $\theta_{j+1}$ by, for any state $\sigma \models \theta_j$ adding a state $\sigma'$ to $\theta_{j+1}$ that results from $\prog_j$'s selected execution on $\sigma$.
	The statement then follows from the fact that $\vdash \tripH{\cdot}{\cdot}{\cdot}$ and $\vdash \tripN{\cdot}{\cdot}{\cdot}$ are complete. \qed
\end{proof}

\completeness*
\begin{proof}
	By \Cref{lem:comp_comp}, proof rule \nameref{rule:comp} is complete for FEHTs. 
	All that remains to argue is that we can derive \nameref{rule:comp} within FEHL.	
	We can easily do this by bring each program $\prog$ in $\univC \cup \existsC$ into the form $\prog\mysemi \myskip$ (using \nameref{rule:forall-intro} and \nameref{rule:exists-intro}), apply \nameref{rule:forall-step-app} $k$ times to all the universally quantified copies, followed by $l$ applications of  \nameref{rule:exists-step-app} and always transform the precondition to $\Theta_j$ (after $j$ applications).
	Afterward we are left with precondition $\theta_{k+l+1} = \Psi$ and  $k+l$ \myskip{} programs, which discharge using $k$ applications of \nameref{rule:forall-elim}, $l$ applications of \nameref{rule:exists-elim}, and a final application of \nameref{rule:done}.\qed
\end{proof}

\section{Underapproximate Hoare Triples}\label{app:ehl}

We elaborate briefly on the (forward-style) underapproximate Hoare Triples (UHTs) we use to discharge non-relation obligations for existentially quantified executions. 
As already defined in \Cref{sec:FEHL}:

\begin{definition}
	An UHT $\tripN{\Phi}{\prog}{\Psi}$ is valid if for all states $\sigma$ with $\sigma \models \Phi$ there \emph{exists} a state $\sigma'$ such that $\sem{\prog}(\sigma, \sigma')$ and $\sigma' \models \Psi$.
\end{definition}

We can show that -- similar to HTs and ITs -- UHTs are supported by a sound-and-complete proof system, as needed for completeness of FEHL (cf.~\Cref{theo:completeness}).
As usual for complete proof systems, our system does not provide a direct verification path (deciding if an UHT is valid is undecidable), but rather strengthens FEHLs completeness statement. 
We present our proof system in \Cref{fig:ehl_rules}.
Most rules are standard.
Note that our loop rule is similar to that found in total Hoare logic \cite{Apt81}, i.e., implicitly encoded a variant. 
A simple induction shows:

\begin{proposition}[Soundness]
	If $\vdash \tripN{\Phi}{\prog}{\Psi}$ then $\tripN{\Phi}{\prog}{\Psi}$ is valid.
\end{proposition}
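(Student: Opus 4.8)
The plan is to prove soundness by structural induction on the derivation of $\vdash \tripN{\Phi}{\prog}{\Psi}$, performing a case analysis on the last rule applied. The induction hypothesis gives, for each premise of the form $\vdash \tripN{\Phi'}{\prog'}{\Psi'}$, that the triple $\tripN{\Phi'}{\prog'}{\Psi'}$ is valid, i.e., every state satisfying $\Phi'$ admits \emph{some} terminating execution of $\prog'$ ending in a state satisfying $\Psi'$. In each case the goal is to exhibit, for an arbitrary $\sigma \models \Phi$, a concrete witness state $\sigma'$ with $\sem{\prog}(\sigma, \sigma')$ and $\sigma' \models \Psi$, reading the execution directly off the operational semantics of \Cref{fig:sem}.

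For the axiomatic rules the witnesses are immediate. For a deterministic assignment $x \myassign e$, the unique successor $\sigma[x \mapsto \bsem{e}(\sigma)]$ serves as the witness; for the nondeterministic choice $x \myassign \star$, the existential character of UHTs is exactly what is needed — the rule commits to some value $z \in \intSet$, and $\sigma[x \mapsto z]$ is a legal successor by the semantics. For $\myassume(b)$, the side condition (effectively $\Phi \Rightarrow b$) guarantees $\bsem{b}(\sigma) = \top$, so $\sigma$ itself is a successor. The rule of consequence is discharged by chasing the governing implications through the existential witness. Sequential composition and the conditional are routine: for $\prog_1 \mysemi \prog_2$ I would compose the two witnesses supplied by the induction hypotheses (threading them through the intermediate assertion $\Phi'$), and for the conditional I would case-split on $\bsem{b}(\sigma)$ and invoke the appropriate branch hypothesis, using that for $\sigma \models b$ (resp.~$\sigma \models \neg b$) the semantics of $\myif$ coincides with that of the taken branch.

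The one genuinely delicate case is the loop rule. Since UHTs demand the \emph{existence} of a terminating execution, the $\mywhile$ rule is stated in the style of total Hoare logic \cite{Apt81}, implicitly encoding a variant, i.e., a well-founded decreasing measure attached to the invariant. Here a plain appeal to the induction hypothesis on the loop body does not by itself yield a terminating witness run; instead I would perform an inner induction on the variant's value, exploiting well-foundedness of the underlying ordering. Concretely, from a state satisfying the invariant with measure $m$, the body hypothesis produces a successor that still satisfies the invariant but with a strictly smaller measure; well-founded induction on $m$ then yields a finite sequence of body executions after which the guard $b$ fails, and stitching these together via the semantic rule for $\mywhile$ gives the required terminating execution ending in a state satisfying the invariant together with $\neg b$, hence $\Psi$. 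Establishing that the chosen execution actually reaches a guard-false state in finitely many steps is the main obstacle, and it is the only point at which the structural induction must be supplemented by a separate well-founded induction on the variant.
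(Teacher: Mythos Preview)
Your proposal is correct and matches the paper's intended approach: the paper merely states ``A simple induction shows'' without giving any details, and what you have written is exactly the structural induction on derivations that this phrase advertises. Your treatment of the loop rule via an inner induction on the natural-number index $n$ (which is precisely the implicit variant in the indexed family $\Phi_0, \Phi_1, \ldots$) is the right way to discharge that case; the only case you do not mention explicitly is the $n$-ary disjunction rule, but it is trivial.
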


\begin{proposition}[Completeness]
	If $\tripN{\Phi}{\prog}{\Psi}$ is valid then $\vdash \tripN{\Phi}{\prog}{\Psi}$.
\end{proposition}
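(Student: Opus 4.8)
The plan is to prove relative completeness in the sense of Cook \cite{Cook78}, using the assumed expressiveness of the assertion language (every set of states is definable). The central object is the \emph{under-approximate weakest precondition}: for a program $\prog$ and assertion $\Psi$, let $\mathit{uwp}(\prog, \Psi)$ be the assertion characterizing exactly those states $\sigma$ for which there exists $\sigma'$ with $\sem{\prog}(\sigma, \sigma')$ and $\sigma' \models \Psi$; by expressiveness this set is definable. Unfolding the definition, validity of $\tripN{\Phi}{\prog}{\Psi}$ is \emph{equivalent} to $\Phi \Rightarrow \mathit{uwp}(\prog, \Psi)$. Moreover, the consequence rule for UHTs strengthens preconditions (every $\Phi$-state that satisfies a stronger formula still inherits its witnessing successor), so it suffices to prove the key lemma $\vdash \tripN{\mathit{uwp}(\prog, \Psi)}{\prog}{\Psi}$ for all $\prog, \Psi$; the proposition then follows by one application of consequence, strengthening $\mathit{uwp}(\prog,\Psi)$ to $\Phi$.

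I would prove the key lemma by structural induction on $\prog$. The atomic and compositional cases are routine and mirror the standard Hoare-completeness argument. For $\myskip$ the claim is immediate; for $x \myassign e$, $x \myassign \star$, and $\myassume(b)$ one checks that $\mathit{uwp}$ coincides with the precondition supplied by the corresponding axiom; for $\prog \mysemi \progg$ one uses the identity $\mathit{uwp}(\prog \mysemi \progg, \Psi) \equiv \mathit{uwp}(\prog, \mathit{uwp}(\progg, \Psi))$ together with the two induction hypotheses; and for $\myif(b, \prog_1, \prog_2)$ one splits $\mathit{uwp}$ along $b$ and applies the hypothesis to each branch. In each case the only extra ingredient beyond the syntactic rule is an application of consequence to align preconditions.

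The main obstacle is the loop $\mywhile(b, \prog)$. Because the under-approximate reading demands an \emph{actually terminating} run, the loop rule must be total-correctness style, carrying a variant, as in Apt's total Hoare logic \cite{Apt81}. I would take as invariant $I := \mathit{uwp}(\mywhile(b, \prog), \Psi)$ and stratify it by the minimal number of body iterations still required: for $n \in \nat$ let $I_n$ be the (definable) assertion holding in precisely those states from which $\Psi$ is reachable through the loop using at most $n$ executions of $\prog$, so that $I_0 \equiv \neg b \land \Psi$ and $I \equiv \exists n \ldot I_n$. Two facts then discharge the premises of the loop rule. First, $I \land \neg b \Rightarrow \Psi$, since a $\neg b$-state terminates at once and hence already lies in $\mathit{uwp}$. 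Second, for every $n \geq 1$ we have $I_n \land b \Rightarrow \mathit{uwp}(\prog, I_{n-1})$: a minimal-rank witnessing run from such a state must take at least one body step, exposing a body-successor from which $\Psi$ is reachable in at most $n-1$ iterations, i.e.~a successor in $I_{n-1}$. The induction hypothesis for $\prog$, via consequence, therefore yields $\vdash \tripN{I_n \land b}{\prog}{I_{n-1}}$ for every $n$, which is exactly the strictly decreasing, well-founded premise consumed by the variant-based loop rule; assembling these with a final consequence step ($I_0 \Rightarrow \Psi$) closes the loop case and the induction. The delicate points are purely the definability of the stratified variant $I_n$ (guaranteed by expressiveness) and the argument that a minimal-rank state must expose a strictly smaller-rank successor under the body, both of which become routine once the existential reading is fixed.
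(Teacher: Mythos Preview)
Your overall strategy is the standard one and is precisely what the paper has in mind: the paper offers no detailed argument beyond ``a simple induction shows'', and the Cook-style route via an under-approximate weakest precondition $\mathit{uwp}(\prog,\Psi)$ together with structural induction on $\prog$ is the intended proof. The atomic, sequential, and conditional cases go through as you sketch.

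There is, however, a small technical mismatch in the loop case between your stratification and the paper's loop rule in \Cref{fig:ehl_rules}. That rule demands a family $\Phi_0,\Phi_1,\ldots$ with $\Phi_n \Rightarrow b$ for every $n \geq 1$, $\Phi_0 \Rightarrow \neg b$, and $\vdash \tripN{\Phi_{n+1}}{\prog}{\Phi_n}$. Your $I_n$ (``reachable through the loop in at most $n$ body executions'') fails the first condition, since $I_0 \subseteq I_n$ and $I_0 \Rightarrow \neg b$. You compensate by conjoining $b$ and deriving $\vdash \tripN{I_n \land b}{\prog}{I_{n-1}}$, but if one then takes $\Phi_n := I_n \land b$ the rule's body premise for $n \geq 2$ requires the \emph{post}condition $I_{n-1} \land b$, which is not obtainable: after one body step you may land in a $\neg b$-state (namely one in $I_0$), and consequence only \emph{weakens} postconditions. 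The easy fix is an exact-rank stratification: set $\Phi_0 := \neg b \land \Psi$ and $\Phi_{n+1} := b \land \mathit{uwp}(\prog,\Phi_n)$. Then all three premises of the loop rule hold verbatim (the body premise follows from the induction hypothesis for $\prog$ and consequence), one has $\exists n\ldot \Phi_n \equiv \mathit{uwp}(\mywhile(b,\prog),\Psi)$, and a final consequence step from $\Phi_0$ to $\Psi$ closes the case.
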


\begin{figure}[!t]
	
	\def\defaultHypSeparation{\hskip .1in}
	
	\begin{minipage}{0.24\textwidth}
		\centering
		\scalebox{0.9}{\parbox{\linewidth}{
		\begin{prooftree}
			\AxiomC{}
			\UnaryInfC{$\vdash\tripN{\Phi}{\myskip}{\Phi}$}
		\end{prooftree}}}
	\end{minipage}%
	\begin{minipage}{0.25\textwidth}
		\centering
		\scalebox{0.9}{\parbox{\linewidth}{
		\begin{prooftree}
			\AxiomC{}
			\UnaryInfC{$\vdash\tripN{\Phi[e/x]}{x \myassign e}{\Phi}$}
		\end{prooftree}}}
	\end{minipage}%
	\begin{minipage}{0.27\textwidth}
		\centering
		\scalebox{0.9}{\parbox{\linewidth}{
		\begin{prooftree}
			\AxiomC{}
			\UnaryInfC{$\vdash\tripN{\exists x. \Phi}{x \myassign \star}{\Phi}$}
		\end{prooftree}}}
	\end{minipage}%
	\begin{minipage}{0.27\textwidth}
		\centering
		\scalebox{0.9}{\parbox{\linewidth}{
		\begin{prooftree}
			\AxiomC{$\vdash\Phi \Rightarrow b$}
			\UnaryInfC{$\vdash\tripN{\Phi}{\myassume(b)}{\Phi}$}
		\end{prooftree}}}
	\end{minipage}
	
	\vspace{-2mm}
	
	\begin{minipage}{0.45\textwidth}
		\centering
		\scalebox{0.9}{\parbox{\linewidth}{
		\begin{prooftree}
			\AxiomC{$\vdash\tripN{\Phi \land b}{\prog_1}{\Psi}$}
			\AxiomC{$\vdash\tripN{\Phi \land \neg b}{\prog_2}{\Psi}$}
			\BinaryInfC{$\vdash\tripN{\Phi}{\myif(b, \prog_1, \prog_2)}{\Psi}$}
		\end{prooftree}}}
	\end{minipage}%
	\begin{minipage}{0.3\textwidth}
		\centering
		\scalebox{0.9}{\parbox{\linewidth}{
		\begin{prooftree}
			\AxiomC{\stackanchor{$\Phi \Rightarrow \Phi'$}{$\Psi' \Rightarrow \Psi$}}
			\AxiomC{$\vdash\tripN{\Phi'}{\prog}{\Psi'}$}
			\BinaryInfC{$\vdash\tripN{\Phi}{\prog}{\Psi}$}
		\end{prooftree}}}
	\end{minipage}%
	\begin{minipage}{0.25\textwidth}
		\centering
		\scalebox{0.9}{\parbox{\linewidth}{
		\begin{prooftree}
			\AxiomC{$\Big[\vdash\tripN{\Phi_i}{\prog}{\Psi}\Big]_{i=1}^n$}
			\UnaryInfC{$\vdash\tripN{\bigvee_{i=1}^n \Phi_i}{\prog}{\Psi}$}
		\end{prooftree}}}
	\end{minipage}%

	\vspace{-2mm}

	\begin{minipage}{0.6\textwidth}
		\centering
		\scalebox{0.9}{\parbox{\linewidth}{
		\begin{prooftree}
			\AxiomC{$\vdash\tripN{\Phi_{i+1}}{\prog}{\Phi_{i}}$}
			\AxiomC{$\forall i  \geq 1\ldot \Phi_i \Rightarrow b$}
			\AxiomC{$\Phi_0 \Rightarrow \neg b$}
			\TrinaryInfC{$\vdash\tripN{\exists n \in \nat. \Phi_n}{\mywhile(b, \prog)}{\Phi_0}$}
		\end{prooftree}}}
	\end{minipage}%
	\begin{minipage}{0.3\textwidth}
		\centering
		\scalebox{0.9}{\parbox{\linewidth}{
		\begin{prooftree}
			\AxiomC{$\vdash\tripN{\Phi}{\prog_1}{\Phi'}$}
			\AxiomC{$\vdash\tripN{\Phi'}{\prog_2}{\Psi}$}
			\BinaryInfC{$\vdash\tripN{\Phi}{\prog_1\mysemi\prog_2}{\Psi}$}
		\end{prooftree}}}
	\end{minipage}%
	
	\caption{Proof rules for UHTs } \label{fig:ehl_rules}
\end{figure}

\section{Parametric Postconditions}

\paraPostSound*
\begin{proof}
	We assume that 
	\begin{align}\label{eq:FEF}
		\textstyle
		\bigforall_{x \in \progVars_1 \cup \cdots \cup \progVars_k} x\ldot \bigexists_{\mu \in \parameter} \mu\ldot \condF \land \bigforall_{x \in \progVars_{k+1} \cup \cdots \cup \progVars_{k+l}} x \ldot  (\postF \Rightarrow \Psi)
	\end{align}
	holds and show that $\rel{\Phi}{\prog_1 \oast \cdots \oast \prog_k}{\prog_{k+1} \oast \cdots \oast \prog_{k+l}}{\Psi}$ is valid. 
	Let $\sigma_1, \ldots, \sigma_{k+l}$ and $\sigma_1', \ldots, \sigma_{k}'$ be arbitrary states such that  $\bigoplus_{i=1}^{k+l} \sigma_i \models \Phi$ and $\sem{\prog_i}(\sigma_i, \sigma_i')$ holds for all $i \in [1, k]$.
	We instantiate the universal quantifiers in \Cref{eq:FEF} with the concrete values from $\sigma_1', \ldots, \sigma_{k}'$.
	As \Cref{eq:FEF} holds, we thus get a parameter evaluation $\kappa$ such that $\kappa \models \condF$ (by extracting a witness for the existential quantifiers in \Cref{eq:FEF}).
	We now define final states $\sigma_{k+1}', \ldots, \sigma_{k+l}'$ as some states such that $\bigoplus_{i=1}^{k+l} \sigma_i'  \models \postF[\kappa]$. 
	By condition \textbf{(1)} in the definition of a parametric postcondition (\Cref{def:paraPost}) such states exist. 
	Moreover, by condition \textbf{(2)} in \Cref{def:paraPost}, we have that  $\sem{\prog_i}(\sigma_i, \sigma_i')$ holds for all $i \in [k+1, k+l]$.
	We now instantiate the innermost universal quantification in \Cref{eq:FEF} with the concrete values from $\sigma_{k+1}', \ldots, \sigma_{k+l}'$. 
	By assumption, $\bigoplus_{i=1}^{k+l} \sigma_i' \models \postF[\kappa]$, so as the premise of the implication in \Cref{eq:FEF} holds.
	We thus get $\bigoplus_{i=1}^{k+l} \sigma_i' \models \Psi$ as required. 
	The final states $\sigma_{k+1}', \ldots, \sigma_{k+l}'$ thus serve as witnesses to show the validity of $\rel{\Phi}{\prog_1 \oast \cdots \oast \prog_k}{\prog_{k+1} \oast \cdots \oast \prog_{k+l}}{\Psi}$.\qed
\end{proof}

\fi

\end{document}